\documentclass[a4paper,10pt]{article}		
\usepackage[utf8]{inputenc}			
\usepackage[T1]{fontenc}			
\usepackage[parfill]{parskip}			

\usepackage[noblocks]{authblk}

\usepackage{lipsum}

\usepackage{xcolor}
\usepackage{adjustbox}
\usepackage[most]{tcolorbox}

\usepackage{xcolor} 
\usepackage{listings}

\lstdefinestyle{Terminal} {
  backgroundcolor=\color{black},
  basicstyle=\scriptsize\color{white}\ttfamily
}

\usepackage{mathrsfs}

\lstset{
  language=C++,
  backgroundcolor=\color{black!5},
  basicstyle=\footnotesize
}

\usepackage{amsthm} 				
\usepackage{amsmath}				
\usepackage{amssymb}				
\usepackage{mathtools}				

\usepackage{graphicx}				
\usepackage{fancyhdr} 				

\usepackage{listings}				
\usepackage{color}					
\usepackage{textcomp}				

\usepackage{calc}

\usepackage{eurosym}				

\usepackage{url}
\urldef\grassmannurl\url{https://proofwiki.org/wiki/Grassmann%27s_Identity}

\usepackage[left=2.00cm, right=2.00cm, top=2.00cm, bottom=2.00cm]{geometry}

\usepackage{ifthen}				

\usepackage{hyperref}

\setlength\headheight{5mm}


\fancyhead[L]{sLORETA zero error proof}
\fancyhead[C]{}
\fancyhead[R]{Malte Höltershinken}

\widowpenalty = 10000
\clubpenalty = 10000
\displaywidowpenalty = 10000

\theoremstyle{definition}
\newtheorem{theorem}{Theorem}
\newtheorem{defin}{Definition}

\newtheorem{lemma}{Lemma}
\newtheorem*{rem}{Remark}

\renewcommand{\phi}{\varphi}						

\DeclareMathOperator{\diag}{diag}
\DeclareMathOperator{\Id}{Id}

\DeclareMathOperator*{\argmin}{arg\,min}
\DeclareMathOperator*{\argmax}{arg\,max}
\DeclareMathOperator{\Image}{Im}
\DeclareMathOperator{\Ker}{Ker}
\DeclareMathOperator{\rank}{rank}


\DeclareMathOperator{\Expected}{\mathbb{E}}
\DeclareMathOperator{\rv}{rv}
\DeclareMathOperator{\Cov}{Cov}
\DeclareMathOperator{\Trace}{Tr}

\newlength{\laengeOben}
\newlength{\laengeUnten}

\newcommand{\stackrelC}[2]{
\settowidth{\laengeOben}{${{}^{#1}}$} \settowidth{\laengeUnten}{#2}
\addtolength{\laengeOben}{- \laengeUnten}
\ifthenelse{\lengthtest{\laengeOben > 0pt}}{\kern -0.5 \laengeOben}{}
\stackrel{#1}{#2}
}

\setcounter{secnumdepth}{5}		

\title{sLORETA is equivalent to single dipole scanning}

\author[1, 2]{Malte B. Höltershinken\footnote{Correspondence: \href{mailto:m_hoel20@uni-muenster.de}{m\_hoel20@uni-muenster.de}}}
\author[1, 2]{Tim Erdbrügger}
\author[1, 3]{Carsten H. Wolters}

\affil[1]{Institute for Biomagnetism and Biosignalanalysis, University of Münster, Münster, Germany}
\affil[2]{Faculty of Mathematics and Computer Science, University of Münster, Münster, Germany}
\affil[3]{Otto Creutzfeldt Center for Cognitive and Behavioral Neuroscience, University of Münster, Münster, Germany}

\date{August 25, 2024}


\begin{document}

\maketitle

\begin{abstract}
In this paper, we show that each member of a family of reconstruction approaches, which includes the classical sLORETA approach and the eLORETA approach, is exactly equivalent to a single dipole scan in some inner product norm. Additionally, we investigate NAI/AG and SAM/UNG beamformers and show that, in scenarios with a single active source with additive uncorrelated noise, they are also equivalent to some form of dipole scanning. As a particular consequence, this shows that, under ideal conditions in a noisy single source case, NAI and SAM (resp. AG and UNG) beamformers are equivalent. Finally, we also discuss how to generalize scalar beamformers to vector beamformers.
\end{abstract}

\section{Introduction}
sLORETA \cite{sloreta_paper} is a widely used approach for source reconstruction. The approach is typically derived as a reweighting of the minimum norm reconstruction and interpreted as a distributed inverse method. One of its appealing properties is that it can reconstruct noiseless dipolar signals \cite{sekihara_neuroimage_loc, greenblatt_ung_unbiased}. We will show that sLORETA and its generalizations defined in \cite{pascualmarqui_eloreta} are single dipole scans in inner product spaces. From this, it becomes immediately clear why sLORETA and its generalizations can reconstruct noiseless dipolar sources. Additionally, this gives an easily understandable meaning to the output generated by sLORETA and thus aids in its interpretation for source reconstruction. We will only present the vector sLORETA case, as the scalar case can be handled using the same strategies with simpler arguments.

Furthermore, if one assumes noisy signals, it was shown that for white noise, sLORETA introduces a bias into the reconstruction \cite{sekihara_neuroimage_loc, greenblatt_ung_unbiased}. In \cite{pascualmarqui_eloreta}, the author responds by showing that if the noise consists of a certain combination of sensor noise and biological noise, sLORETA is unbiased. We will show that these observations are special cases of a more general result. Concretely, we will show that in the presence of noise, a dipole scan is unbiased if, and only if, the inner product is induced by the inverse noise covariance matrix. Using the interpretation of sLORETA as a dipole scan, we will show how to derive the aforementioned results as special cases of this statement.

In the second part of the paper, we will investigate beamforming approaches. More concretely, we will show that if the underlying signal is generated by a single dipole with additive uncorrelated noise, the neural activity index (NAI) based beamforming \cite{van_veen_lcmv_beamforming}, synthetic aperture magnetometry (SAM) beamforming \cite{vrba_sam_lcmv_comparison}, array gain (AG) beamforming and unit noise gain (UNG) beamforming \cite{sekihara_beamforming} approaches can also be interpreted as some form of dipole scans, for NAI and SAM with the inner product induced by the noise covariance matrix, and for AG and UNG with the standard L2 inner product. This generalizes a previous result from \cite{hillebrand_sam_dipole_scanning}, where the authors show that, under the assumption of uncorrelated white noise and a single active source, the scalar SAM beamformer is equivalent to a single dipole scan. A particularly interesting consequence of our results is that under ideal conditions NAI and SAM (resp. AG and UNG) are exactly equivalent for single source scenarios. Finally, we discuss different vector versions of the NAI and show that the original definition in \cite{van_veen_lcmv_beamforming} in general leads to a biased reconstruction, while the version introduced in \cite{moiseev_mcmv_beamforming} is unbiased.

\section{sLORETA and single dipole scanning}
\subsection{Definitions}
We begin by giving all the necessary definitions.

\begin{defin}[Inner product norm]
Let $C$ be a positive definite operator on some inner product space. Then we can define another inner product via
\[
\langle x, y\rangle_C = \langle C \cdot x, y \rangle
\]
and a corresponding norm via
\[
\|x\|_C = \sqrt{\langle C \cdot x, x\rangle}.
\]
Note that the choice $C = \Id$ reproduces the original inner product and norm. Furthermore, note that, for a finite-dimensional space, the set of all inner products is in one-to-one correspondence to the set of positive definite operators.
\end{defin} 

\begin{defin}[Generalized dipole scan]
Let $C$ be a positive definite operator defining an inner product norm $\|\cdot\|_C$. Let $A \in \mathbb{R}^{N \times 3}$ denote a leadfield and $d \in \mathbb{R}^N$ a measurement vector. We then define the \textit{reconstructed dipole moment} to be
\[
j_C(d, A) = \argmin_{j \in \mathbb{R}^3} \|d - A \cdot j\|_C^2.
\]
We then define the \textit{goodness of fit (GOF)} of the leadfield $A$ to be
\[
\text{GOF}_C(d, A) = 1 - \frac{\|d - A \cdot j_C(d, A)\|_C^2}{\|d\|_C^2}.
\]
The \textit{dipole scan} is then given by searching for the leadfield which maximizes the goodness of fit.
\end{defin}

The standard GOF-scan now corresponds to the choice $C = \Id$ on $\mathbb{R}^N$ equipped with the standard inner product.

\begin{defin}[generalized sLORETA, vector case]
\label{sloreta_definition}
Let $C \in \mathbb{R}^{N \times N}$ be a symmetric matrix defining a positive definite operator on a space containing the range of the leadfields. Let $A \in \mathbb{R}^{N \times 3}$ denote a leadfield, and let $d \in \mathbb{R}^N$ denote a measurement vector. We then define the \textit{generalized sLORETA reconstruction} to be
\[
j^{\text{\,sLORETA}}_{C}(d, A) = \left(A^\intercal C A \right)^{-\frac{1}{2}} A^\intercal C d \in \mathbb{R}^3.
\]
As defined in \cite{pascualmarqui_eloreta}, the generalized sLORETA method now consists of searching for large values of $\|j^{\text{\,sLORETA}}_{C}(d, A)\|^2$.
\end{defin}

The classical sLORETA approach now corresponds to the choice $C = \left(L L^\intercal + \alpha H\right)^+$, where $L = (L_1, ..., L_M) \in \mathbb{R}^{N \times (3M)}$ is the complete leadfield, $H$ is the orthogonal projection onto $\{1\}^\perp$, and $\alpha \geq 0$ (as defined in \cite{sloreta_paper}) respectively $C = \left(L L^\intercal + \alpha \Id\right)^{-1}$ (as defined by \cite{sekihara_beamforming}). The eLORETA approach, introduced in \cite{pascualmarqui_eloreta}, corresponds to the choice $C = \left(L W^{-1} L + \alpha H \right)^+$, where $W = \diag(W_1, ..., W_M)$ is a block diagonal matrix given as the solution of the set of equations
\[
W_i = \left( L_i^\intercal \left( L W^{-1} L^\intercal + \alpha H\right)^+ L_i\right)^{\frac{1}{2}}.
\]

\subsection{The equivalence between sLORETA and single dipole scanning}
We will now state and prove that all generalized sLORETA approaches are equivalent to generalized dipole scans.

\begin{theorem}[Equivalence of sLORETA and dipole scanning]
\label{sloreta_dipole_scan_equivalence_theorem}
Let $C$ be a symmetric matrix defining a positive definite operator on a space containing the range of the leadfields. Let $\|\cdot \|_C$ denote the corresponding inner product norm. Then we have
\[
\|j^{\text{\,sLORETA}}_{C}(d, A)\|^2 = \|d\|_C^2 \cdot \text{GOF}_C(d, A).
\]
Hence we see that the sLORETA estimate at a source position is, up to a scalar multiple, given by the goodness of fit of the leadfield at the investigated source position to the data. In particular, we see that visualizing the sLORETA estimate is the same as visualizing a goodness of fit distribution.
\end{theorem}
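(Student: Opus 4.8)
The plan is to compute both sides of the claimed identity explicitly and show that each reduces to the same quadratic form in the data $d$. Throughout I would abbreviate $M = A^\intercal C A$. Because $C$ is positive definite on a space containing the range of $A$ and the leadfield $A$ has full column rank, $M$ is symmetric positive definite; in particular $M^{-1}$ and the symmetric square root $M^{-1/2}$ occurring in the sLORETA estimate are well defined, and $(M^{-1/2})^\intercal M^{-1/2} = M^{-1}$.

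First I would identify the reconstructed dipole moment $j_C(d,A)$. Minimizing $\|d - Aj\|_C^2 = \langle C(d-Aj),\, d-Aj\rangle$ over $j \in \mathbb{R}^3$ yields the normal equations $A^\intercal C A\, j = A^\intercal C d$, so that $j_C(d,A) = M^{-1} A^\intercal C d$. Geometrically, $A\, j_C(d,A)$ is the orthogonal projection of $d$ onto the range of $A$ taken with respect to the $C$-inner product.

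Next I would simplify the goodness of fit using this projection structure. Since the residual $d - A\,j_C(d,A)$ is $C$-orthogonal to $A\,j_C(d,A)$, the Pythagorean identity for $\|\cdot\|_C$ gives $\|d\|_C^2 = \|A\,j_C(d,A)\|_C^2 + \|d - A\,j_C(d,A)\|_C^2$. Substituting this into the definition of $\text{GOF}_C(d,A)$ collapses the fraction and leaves $\text{GOF}_C(d,A) = \|A\,j_C(d,A)\|_C^2 / \|d\|_C^2$, whence $\|d\|_C^2 \cdot \text{GOF}_C(d,A) = \|A\,j_C(d,A)\|_C^2$.

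It then remains to match the right-hand side with $\|j^{\text{\,sLORETA}}_{C}(d,A)\|^2$. Expanding the $C$-norm of the projection, and using the symmetry of $C$ and $M$, gives $\|A\,j_C(d,A)\|_C^2 = j_C(d,A)^\intercal M\, j_C(d,A) = d^\intercal C A\, M^{-1} A^\intercal C d$. On the other hand the ordinary Euclidean norm of the sLORETA estimate is $\|M^{-1/2} A^\intercal C d\|^2 = d^\intercal C A\, M^{-1/2} M^{-1/2} A^\intercal C d = d^\intercal C A\, M^{-1} A^\intercal C d$. The two expressions agree, which proves the theorem. The only step requiring a little care is this last cancellation: the symmetric square root built into the sLORETA estimate is exactly what makes its plain Euclidean norm reproduce the $C$-norm of the $C$-orthogonal projection of $d$, even though $j_C(d,A)$ and $j^{\text{\,sLORETA}}_{C}(d,A)$ are genuinely different vectors, differing by the factor $M^{-1/2}$. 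I do not expect any serious obstacle; the main conceptual ingredient is recognizing the Pythagorean decomposition that rewrites the GOF as the relative energy of the projected data.
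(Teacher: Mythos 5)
Your proposal is correct and follows essentially the same route as the paper: identify $j_C(d,A) = (A^\intercal C A)^{-1}A^\intercal C d$ via the normal equations, reduce $\|d\|_C^2\cdot\text{GOF}_C(d,A)$ to the quadratic form $d^\intercal C A (A^\intercal C A)^{-1} A^\intercal C d$, and observe that the Euclidean norm of the sLORETA estimate yields the same expression because $(M^{-1/2})^\intercal M^{-1/2} = M^{-1}$. The only cosmetic difference is that you obtain the GOF simplification from the Pythagorean identity for the $C$-orthogonal projection, whereas the paper expands $\|d - A\,j_C(d,A)\|_C^2$ term by term and cancels; these are the same computation.
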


To prove this theorem, we will first prove two lemmas.

\begin{lemma}
\label{generalized_least_squares}
Let $C$, $A$, and $d$ be as above. We then have
\[
\argmin_{j \in \mathbb{R}^3} \|d - A \cdot j\|_C^2 = \left( A^\intercal C A \right)^{-1} A^\intercal C d.
\]
\end{lemma}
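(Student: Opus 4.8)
The plan is to solve this weighted least-squares problem by reducing it to an ordinary least-squares problem through a factorization of the weight matrix $C$. Since $C$ is symmetric and positive definite, it admits a symmetric positive definite square root $C^{\frac{1}{2}}$, and I would begin by rewriting the objective as
\[
\|d - A \cdot j\|_C^2 = \langle C(d - Aj), d - Aj\rangle = \langle C^{\frac{1}{2}}(d - Aj), C^{\frac{1}{2}}(d - Aj)\rangle = \|C^{\frac{1}{2}} d - C^{\frac{1}{2}} A j\|^2,
\]
where the last norm is the standard Euclidean norm. This turns the problem into a standard least-squares fit of the transformed data $\tilde{d} = C^{\frac{1}{2}} d$ by the transformed leadfield $\tilde{A} = C^{\frac{1}{2}} A$.

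Next I would apply the classical normal-equations argument. The function $j \mapsto \|\tilde{d} - \tilde{A} j\|^2$ is convex and differentiable, so setting its gradient to zero yields the normal equation $\tilde{A}^\intercal \tilde{A} j = \tilde{A}^\intercal \tilde{d}$. Substituting back $\tilde{A} = C^{\frac{1}{2}} A$ and using $(C^{\frac{1}{2}})^\intercal = C^{\frac{1}{2}}$ together with $C^{\frac{1}{2}} C^{\frac{1}{2}} = C$, this becomes $A^\intercal C A \, j = A^\intercal C d$, which gives the claimed minimizer upon inverting $A^\intercal C A$. Alternatively, and perhaps cleanly, one can avoid calculus entirely by completing the square: writing $j^\star = (A^\intercal C A)^{-1} A^\intercal C d$ and expanding $\|d - Aj\|_C^2$ around $j^\star$, the cross term vanishes precisely because of the normal equation, leaving $\|d - Aj\|_C^2 = \|d - Aj^\star\|_C^2 + \|A(j - j^\star)\|_C^2$, which is manifestly minimized at $j = j^\star$.

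The only genuine point requiring care is the invertibility of $A^\intercal C A$, which is what makes the minimizer unique and the formula well-defined. Since $C$ is positive definite, $\langle A^\intercal C A \, x, x\rangle = \|A x\|_C^2 \geq 0$, with equality only if $Ax = 0$; hence $A^\intercal C A$ is positive definite exactly when $A$ has full column rank (i.e. the three dipole orientations produce linearly independent leadfield columns). This is the implicit standing assumption for a generalized dipole scan, and I would state it explicitly. I expect the main obstacle to be purely expository rather than mathematical: the result is a standard generalized-least-squares fact, so the work lies in recording the rank assumption cleanly and in choosing whether to present the gradient computation or the completing-the-square identity, the latter being more self-contained since it simultaneously establishes that the critical point is a global minimizer.
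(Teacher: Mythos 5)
Your proposal is correct and follows essentially the same route as the paper: factor $C = C^{\frac{1}{2}} C^{\frac{1}{2}}$, reduce to an ordinary least-squares problem for $C^{\frac{1}{2}}d$ and $C^{\frac{1}{2}}A$, and read off the minimizer from the Gauss normal equations. Your added remarks (the completing-the-square variant and the explicit full-column-rank condition ensuring $A^\intercal C A$ is invertible) are sound refinements of the same argument rather than a different approach.
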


\begin{proof}
Using $\|x\|_C^2 = \langle C x, x\rangle$, we have for $j \in \mathbb{R}^3$ that
\begin{align*}
\|d - A \cdot j\|_C^2 = \langle C \left(d - A \cdot j \right), d - A \cdot j\rangle
= \langle C^{\frac{1}{2}} \left(d - A \cdot j \right), C^{\frac{1}{2}} \left(d - A \cdot j\right)\rangle
= \|C^{\frac{1}{2}} d - C^{\frac{1}{2}} A \cdot j\|^2.
\end{align*}
Hence we have that
\[
\argmin_{j \in \mathbb{R}^3} \|d - A \cdot j\|_C^2
= \argmin_{j \in \mathbb{R}^3} \|C^{\frac{1}{2}} d - C^{\frac{1}{2}} A \cdot j\|^2,
\]
which is an ordinary least squares problem, whose solution $j_C(d, A)$ is given by the Gauss normal equation
\[
\left( C^{\frac{1}{2}} A \right)^\intercal C^{\frac{1}{2}} A \cdot j_C(d, A) = \left( C^{\frac{1}{2}} A \right)^\intercal C^{\frac{1}{2}} d
\iff j_C(d, A) = \left(A^\intercal C A \right)^{-1} A^\intercal C d.
\]
\end{proof}

\begin{lemma}
\label{gof_expression}
Let $C$, $A$, and $d$ be as above. We then have
\[
\text{GOF}_C(d, A) = \frac{d^\intercal C A\left(A^\intercal C A \right)^{-1} A^\intercal C d}{\|d\|_C^2}.
\]
\end{lemma}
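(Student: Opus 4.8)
The plan is to insert the closed form of the least-squares solution from Lemma~\ref{generalized_least_squares} directly into the definition of $\text{GOF}_C$ and then simplify the residual norm. Writing $j := j_C(d,A) = (A^\intercal C A)^{-1} A^\intercal C d$ and abbreviating $M := A^\intercal C A$, the first step is to observe that, since $C$ is symmetric, the inner-product norm satisfies $\|x\|_C^2 = x^\intercal C x$, so that the numerator appearing in the definition of the GOF becomes $\|d - Aj\|_C^2 = (d - Aj)^\intercal C (d - Aj)$.

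From here I would expand this quadratic form into the four terms $d^\intercal C d$, $-\,d^\intercal C A j$, $-\,j^\intercal A^\intercal C d$, and $j^\intercal M j$. Substituting $j = M^{-1} A^\intercal C d$ and using that both $C$ and $M$ are symmetric (so that $M^{-\intercal} = M^{-1}$), each of the last three terms reduces to $\pm\, d^\intercal C A M^{-1} A^\intercal C d$. The key bookkeeping step is that the two cross terms and the quadratic term combine to leave exactly $-\,d^\intercal C A M^{-1} A^\intercal C d$, so that $\|d - Aj\|_C^2 = \|d\|_C^2 - d^\intercal C A (A^\intercal C A)^{-1} A^\intercal C d$.

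Conceptually, this cancellation is not an accident: $Aj$ is precisely the $C$-orthogonal projection of $d$ onto the range of $A$, so the residual $d - Aj$ is $C$-orthogonal to $Aj$, and the Pythagorean identity gives $\|d\|_C^2 = \|Aj\|_C^2 + \|d - Aj\|_C^2$. Hence $\|d\|_C^2 - \|d - Aj\|_C^2 = \|Aj\|_C^2 = j^\intercal M j = d^\intercal C A M^{-1} A^\intercal C d$, which is exactly the quantity we need; I could present the argument either way, but the projection viewpoint makes the role of symmetry transparent. Finally, dividing by $\|d\|_C^2$ and substituting into $\text{GOF}_C(d,A) = 1 - \|d - Aj\|_C^2/\|d\|_C^2$ yields the claimed formula.

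I do not expect a genuine obstacle here, since the statement is essentially a rearrangement of the least-squares residual already computed in Lemma~\ref{generalized_least_squares}. The only point requiring care is the consistent use of transposes together with the symmetry of $C$ and of $M = A^\intercal C A$; a single sign or transpose slip would break the collapse of the three terms, so I would verify that step explicitly rather than treat it as purely routine.
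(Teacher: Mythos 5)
Your proposal is correct and takes essentially the same route as the paper: expand $\|d - Aj\|_C^2$, substitute $j = (A^\intercal C A)^{-1} A^\intercal C d$, and observe that the cross terms and the quadratic term collapse to a single copy of $d^\intercal C A (A^\intercal C A)^{-1} A^\intercal C d$. The $C$-orthogonal projection/Pythagorean gloss you add is a pleasant conceptual aside, but the underlying computation is identical to the paper's.
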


\begin{proof}
Let $j_C(d, A)$ be defined as in Lemma \ref{generalized_least_squares}. We then have
\begin{align*}
\|d - A \cdot j_C(d, A)\|_C^2 
&= \|d\|_C^2 - 2 \langle d, A \cdot j_C(d, A) \rangle_C + \|A \cdot j_C(d, A)\|_C^2 \\
&= \|d\|_C^2 - 2 d^\intercal C A \left(A^\intercal C A \right)^{-1} A^\intercal C d + \|A \cdot j_C(d A)\|_C^2.
\end{align*}
We can now compute
\begin{align*}
\|A \cdot j_C(d, A)\|_C^2 &= j_C(d, A)^\intercal A^\intercal C A j_C(d, A) \\
&= d^\intercal C A \left(A^\intercal C A \right)^{-1} A^\intercal C A \left(A^\intercal C A \right)^{-1} A^\intercal C d\\
&= d^\intercal C A \left(A^\intercal C A \right)^{-1} A^\intercal C d,
\end{align*}
and thus
\[
\|d - A \cdot j_C(d, A)\|_C^2 = \|d\|_C^2 - d^\intercal C A \left(A^\intercal C A \right)^{-1} A^\intercal C d.
\]
We thus have
\[
\text{GOF}_C(d, A) = 1 - \frac{\|d - A \cdot j_C(d, A)\|_C^2}{\|d\|_C^2}
= \frac{d^\intercal C A \left(A^\intercal C A \right)^{-1} A^\intercal C d}{\|d\|_C^2}.
\]
\end{proof}

\begin{proof}[Proof of theorem \ref{sloreta_dipole_scan_equivalence_theorem}]
By definition \ref{sloreta_definition} and lemma \ref{gof_expression}, we have
\begin{align*}
\|j^{\text{\,sLORETA}}_{C}(d, A)\|^2
= \| \left(A^\intercal C A \right)^{-\frac{1}{2}} A^\intercal C d \|^2
= d^\intercal C A \left(A^\intercal C A \right)^{-1} A^\intercal C d
= \|d\|_C^2 \cdot \text{GOF}_C(d, A).
\end{align*}
\end{proof}

\subsection{The impact of noise}
\label{noise_impact_section}
We now want to investigate a \textbf{noisy single source setting}. Let $L_0 \in \mathbb{R}^{N \times 3}$ be the leadfield corresponding to some point in the head, and let $0 \neq \eta \in \mathbb{R}^3$ be some orientation. We now assume our signal $d(t)$ is of the form
\[
d(t) = L_0 \cdot (s(t) \cdot \eta) + n(t),
\]
where $s(t)$ is the (real) signal amplitude and $n(t)$ is the noise vector. We assume that $s(t)$ is uncorrelated to the components of $n(t)$. We furthermore assume that $Q^2 := \Expected [ s(t)^2] > 0$, $\Expected [n(t)] = 0$, and that $s(t)$ and the components of $n(t)$ are square integrable. Let $N = \Cov(n(t))$ denote the noise covariance matrix.

In the following, we assume that $N$ is invertible. We want to emphasize that this assumption is only used to keep the exposition simple. When restricting to the orthogonal complement of the kernel of $N$ and adapting the statements accordingly, all results remain true. 

Let $A \in \mathbb{R}^{N \times 3}$ be the leadfield at some position we are currently scanning. Given some positive definite operator $C$, we then define the \textbf{residual variance} in the corresponding norm to be
\[
\rv_{C}(d(t), A) = \min_{j \in \mathbb{R}^3} \|d(t) - A \cdot j\|_C^2.
\]
We now want to investigate under what conditions we can guarantee that the expected residual variance is minimal for the true leadfield, i.e. when is the statement
\[
L_0 = \argmin_{\text{$A$ leadfield}} \, \Expected \left[\, \rv_C(d(t), A)\,\right]
\]
true? The answer to this question is given by the following theorem.

\begin{theorem}
\label{minimizer_in_the_presence_of_noise_theorem}
Without further assumptions on the leadfields, we can guarantee
\begin{equation}
\label{minimizer_in_the_presence_of_noise_equation}
L_0 = \argmin_{\text{$A$ leadfield}} \, \Expected \left[\, \rv_C(d(t), A)\,\right]
\end{equation}
if, and only if, $C = \alpha \cdot N^{-1}$ for some $\alpha > 0$. 

\bigskip
\end{theorem}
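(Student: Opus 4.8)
The plan is to reduce the expected residual variance to an explicit function of the scanned leadfield and then read off the role of $C$. First I would apply the computation underlying Lemma~\ref{gof_expression}, which gives, for each realization, $\rv_C(d(t),A)=\|d(t)\|_C^2 - d(t)^\intercal CA(A^\intercal CA)^{-1}A^\intercal C d(t)$. Taking expectations and using $\Expected[d(t)^\intercal M d(t)]=\Trace(MR)$ with the second-moment matrix $R:=\Expected[d(t)d(t)^\intercal]$, the uncorrelatedness of $s(t)$ and the components of $n(t)$ together with $\Expected[n(t)]=0$ yields $R=Q^2 g_0 g_0^\intercal + N$, where $g_0:=L_0\eta$ is the signal topography. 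Hence $\Expected[\rv_C(d(t),A)]=\Trace(CR)-F(A)$ with $F(A):=\Trace\!\big(CA(A^\intercal CA)^{-1}A^\intercal CR\big)$, and since $\Trace(CR)$ is independent of $A$, minimizing the expected residual variance is the same as maximizing $F$.

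Next I would make the geometry explicit. Writing $P_A:=C^{1/2}A(A^\intercal CA)^{-1}A^\intercal C^{1/2}$ for the orthogonal projection onto $\Image(C^{1/2}A)$, and setting $\tilde g_0:=C^{1/2}g_0$ and $\tilde N:=C^{1/2}NC^{1/2}$, a short calculation turns the objective into $F(A)=Q^2\|P_A\tilde g_0\|^2+\Trace(P_A\tilde N)$. Because $C^{1/2}$ is invertible, as $A$ ranges over all full-rank leadfields the subspace $\Image(C^{1/2}A)$ ranges over all three-dimensional subspaces $V$, so the problem becomes: maximize $Q^2\|P_V\tilde g_0\|^2+\Trace(P_V\tilde N)$ over three-dimensional $V$. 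For $A=L_0$ one has $\tilde g_0\in\Image(C^{1/2}L_0)$, so $P_{L_0}\tilde g_0=\tilde g_0$ and the signal term attains its global maximum $Q^2\|\tilde g_0\|^2$.

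For the ``if'' direction I would substitute $C=\alpha N^{-1}$, which gives $\tilde N=\alpha\,\Id$ and hence $\Trace(P_V\tilde N)=3\alpha$ for every three-dimensional $V$. The noise term is now constant, so $F$ is maximized precisely by maximizing the signal term, which $L_0$ does; thus $L_0$ lies in the argmin (the full set of minimizers consists exactly of the leadfields whose range contains $g_0$, so in a scan over distinct positions $L_0$ is singled out). For the ``only if'' direction I would argue by contraposition: if $C$ is not a positive multiple of $N^{-1}$, then $\tilde N$ is not a scalar multiple of the identity, so (assuming more than three sensors) the sum $S_{\max}$ of its three largest eigenvalues strictly exceeds the sum of its three smallest. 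Taking $V_0$ to be the span of the three smallest eigenvectors and $V^\ast$ the span of the three largest, set $\delta:=S_{\max}-\Trace(P_{V_0}\tilde N)>0$. I would then choose the true leadfield $L_0$ with $\Image(C^{1/2}L_0)=V_0$ and, crucially, of magnitude so small that $Q^2\|\tilde g_0\|^2<\delta$ (scaling a leadfield leaves its range, hence $V_0$, unchanged while shrinking $\tilde g_0$). Comparing $F$ at $L_0$ with a competitor $A^\ast$ satisfying $\Image(C^{1/2}A^\ast)=V^\ast$ gives $F(L_0)<S_{\max}\le F(A^\ast)$, so $A^\ast$ produces strictly smaller expected residual variance and $L_0$ fails to be the minimizer, contradicting \eqref{minimizer_in_the_presence_of_noise_equation}.

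The conceptual heart of the argument, and the step I expect to be the main obstacle, is recognizing that $F$ splits into a signal-capture term $Q^2\|P_V\tilde g_0\|^2$ and a noise-trace term $\Trace(P_V\tilde N)$, and that $C=\alpha N^{-1}$ is exactly the condition making the noise-trace term constant across all candidate subspaces. Once this decoupling is seen, sufficiency is immediate and necessity reduces to exhibiting a single adversarial configuration; the one delicate point there is that the signal term can never help $L_0$ win against a noise-optimal competitor, which is handled by exploiting the freedom to scale the true leadfield down until the noise term dominates.
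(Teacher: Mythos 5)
Your proof is correct, but the ``only if'' direction takes a genuinely different route from the paper's. The first half of your argument --- reducing the expected residual variance to a constant minus $Q^2\|P_V\tilde g_0\|^2+\Trace(P_V\tilde N)$ with $P_V$ the orthogonal projection onto $V=\Image(C^{1/2}A)$, and then noting that $C=\alpha N^{-1}$ is exactly the condition making the noise-trace term constant --- is the same decomposition the paper derives in Lemma~\ref{minimum_rv_characterization}, and the sufficiency argument is identical. For necessity, however, the paper argues locally: it computes the Frechet derivative of the noise functional $g(A)=\Trace\left(\left(A^\intercal CA\right)^{-1}A^\intercal CNCA\right)$ (the appendix calculation), uses that the signal term always has vanishing derivative at $L_0$, and characterizes $Dg(L_0)=0$ by the condition that $\Ker(L_0^\intercal)$ be an invariant subspace of $C\cdot N$, which fails for some $L_0$ unless $CN$ is scalar. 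You instead argue globally and constructively: diagonalize $\tilde N=C^{1/2}NC^{1/2}$, place $\Image(C^{1/2}L_0)$ in the span of the three smallest eigenvectors, and shrink the source strength until the noise term dominates, so the competitor spanning the three largest eigenvectors strictly wins. Your route avoids Frechet calculus entirely and exhibits an explicit adversarial configuration with a quantitative SNR threshold; the paper's route is less elementary but works for every signal strength $Q$ (the signal term contributes nothing at first order at $L_0$, so no shrinking is needed) and yields the finer information of precisely which true leadfields produce a bias. Two minor points you should make explicit: the eigenvalue-gap argument needs more than three sensors (which you do flag), and the step from ``$\tilde N$ is not a scalar multiple of $\Id$'' to ``the sum of the three largest eigenvalues strictly exceeds the sum of the three smallest'' deserves its one-line justification via the ordering of the eigenvalues; neither affects correctness.
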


Note that the choice $C = N^{-1}$ corresponds to a pre-whitening of the data. This theorem will be the consequence of a set of lemmas we will now prove.

\begin{lemma}
\label{minimum_rv_characterization}
We have that
\[
\argmin_{\text{$A$ leadfield}} \, \Expected \left[\, \rv_C(d(t), A)\,\right]
= \argmax_{\text{$A$ leadfield}} \,\, Q^2 \eta^\intercal \left(L_0^\intercal C A \right) \left(A^\intercal C A \right)^{-1} \left(A^\intercal C L_0 \right) \eta + \Trace \left( \left(A^\intercal C A \right)^{-1} A^\intercal C N C A\right).
\]
\end{lemma}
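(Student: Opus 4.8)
The plan is to turn the $\argmin$ into a tractable expression by first writing the residual variance in closed form and then reducing everything to the expectation of a quadratic form in $d(t)$. The key observation is that the inner minimization over $j$ has already been done: the computation inside the proof of Lemma \ref{gof_expression} shows that
\[
\rv_C(d(t), A) = \|d(t)\|_C^2 - d(t)^\intercal C A \left(A^\intercal C A\right)^{-1} A^\intercal C d(t).
\]
Writing $P := C A (A^\intercal C A)^{-1} A^\intercal C$, this is simply $\rv_C(d(t),A) = d(t)^\intercal (C - P)\, d(t)$, so the entire problem collapses to taking the expectation of a quadratic form.

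Next I would establish the auxiliary identity that for any fixed symmetric matrix $M$,
\[
\Expected\left[ d(t)^\intercal M d(t)\right] = Q^2 (L_0 \eta)^\intercal M (L_0 \eta) + \Trace(M N).
\]
To prove it, substitute the signal model $d(t) = s(t) L_0 \eta + n(t)$ and expand into four terms. The two cross terms are linear in the mixed moments $\Expected[s(t)\, n_i(t)]$; since $s(t)$ is uncorrelated with each component of $n(t)$ and $\Expected[n(t)] = 0$, each such moment equals $\Cov(s(t), n_i(t)) + \Expected[s(t)]\Expected[n_i(t)] = 0$, so both cross terms drop out. The pure-signal term contributes $\Expected[s(t)^2]\,(L_0\eta)^\intercal M (L_0 \eta) = Q^2 (L_0\eta)^\intercal M (L_0\eta)$, and the noise term is handled by the trace trick $\Expected[n(t)^\intercal M n(t)] = \Trace\!\left(M\, \Expected[n(t) n(t)^\intercal]\right) = \Trace(M N)$, where $\Expected[n(t) n(t)^\intercal] = \Cov(n(t)) = N$ because $\Expected[n(t)] = 0$.

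Applying this identity with $M = C$ and with $M = P$ and subtracting gives
\[
\Expected\left[\rv_C(d(t), A)\right] = Q^2 (L_0\eta)^\intercal (C - P)(L_0\eta) + \Trace\left((C - P) N\right).
\]
The contributions coming from $C$, namely $Q^2 \|L_0\eta\|_C^2$ and $\Trace(CN)$, do not depend on the scanning leadfield $A$, hence are irrelevant to the $\argmin$; dropping them and flipping the sign converts the minimization into the maximization of $Q^2 (L_0\eta)^\intercal P (L_0\eta) + \Trace(PN)$. Expanding $P$ in the first summand recovers $Q^2 \eta^\intercal (L_0^\intercal C A)(A^\intercal C A)^{-1}(A^\intercal C L_0)\eta$, and a single use of cyclic invariance of the trace rewrites $\Trace(PN) = \Trace\!\left(C A (A^\intercal C A)^{-1} A^\intercal C N\right)$ as $\Trace\!\left((A^\intercal C A)^{-1} A^\intercal C N C A\right)$, which is exactly the claimed form.

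I do not expect a genuine obstacle here; the computation is essentially forced once the inner minimization is eliminated. The only points needing care are the bookkeeping of which terms survive the expectation — the uncorrelatedness and zero-mean hypotheses are precisely what annihilate the cross terms and make the noise term a clean trace against $N$ — and the final cyclic rearrangement, where the $N \times 3$ block $C A$ must be moved around the trace in the correct order to land on the stated expression.
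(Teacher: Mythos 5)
Your proof is correct and follows essentially the same route as the paper: both reduce $\Expected[\rv_C(d(t),A)]$ to the expectation of the quadratic form $d(t)^\intercal C A (A^\intercal C A)^{-1} A^\intercal C \, d(t)$, kill the cross terms using uncorrelatedness and zero-mean noise, and handle the noise term with the trace identity followed by a cyclic permutation. The only cosmetic difference is that the paper detours through Theorem \ref{sloreta_dipole_scan_equivalence_theorem} to identify this quadratic form with $\|j^{\text{\,sLORETA}}_C\|^2$, whereas you read it off directly from the residual-variance formula in Lemma \ref{gof_expression}; the computation is otherwise identical.
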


\begin{proof}
Using theorem \ref{sloreta_dipole_scan_equivalence_theorem}, we can compute that
\begin{align*}
\argmin_{\text{$A$ leadfield}} \, \Expected \left[\, \rv_C(d(t), A)\,\right] 
&= \argmax_{\text{$A$ leadfield}} \, \left(\Expected \left[ \,\|d(t)\|_C^2 \,\right] -  \Expected \left[\, \rv_C(d(t), A)\,\right]\right)
= \argmax_{\text{$A$ leadfield}} \, \Expected \left[ \, \|d(t)\|_C^2 -  \rv_C(d(t), A)\,\right] \\
&= \argmax_{\text{$A$ leadfield}} \, \Expected \left[ \, \|d\|_C^2 \cdot \text{GOF}_C(d(t), A)\,\right]
\stackrel{\text{Theorem \ref{sloreta_dipole_scan_equivalence_theorem}}}{=}
\argmax_{\text{$A$ leadfield}} \,\Expected \left[ \|j^{\text{\,sLORETA}}_{C}(d(t), A)\|^2 \right],
\end{align*}
and since
\begin{multline*}
\Expected \left[ \|j^{\text{\,sLORETA}}_{C}(d(t), A)\|^2 \right]
= \Expected \left[\, d(t)^\intercal C A \left(A^\intercal C A \right)^{-1} A^\intercal C d(t) \, \right] \\
= \Expected [\, 
s(t)^2 \eta^\intercal \left(L_0^\intercal C A \right) \left(A^\intercal C A \right)^{-1} \left(A^\intercal C L_0 \right) \eta
+ n(t)^\intercal C A \left(A^\intercal C A \right)^{-1} A^\intercal C n(t) \\
+ \left(s(t) n(t)^\intercal\right) C A \left(A^\intercal C A \right)^{-1} A^\intercal C L_0 \eta
+ \eta^\intercal L_0^\intercal C A \left(A^\intercal C A \right)^{-1} A^\intercal C \left(s(t) n(t)\right)] \\
= Q^2 \eta^\intercal \left(L_0^\intercal C A \right) \left(A^\intercal C A \right)^{-1} \left(A^\intercal C L_0 \right) \eta
+ \Trace\left(C A \left(A^\intercal C A \right)^{-1} A^\intercal C N\right),
\end{multline*}
the statement of the lemma follows.
\end{proof}

Now let $d_0 = Q L_0 \eta$. Then
\[
Q^2 \eta^\intercal \left(L_0^\intercal C A \right) \left(A^\intercal C A \right)^{-1} \left(A^\intercal C L_0 \right) \eta
= d_0^\intercal C A \left(A^\intercal C A \right)^{-1} A^\intercal C d_0
\stackrel{\text{Lemma \ref{gof_expression}}}{=} \|d_0\|_C^2 \cdot \text{GOF}_C(d_0, A),
\]
and we see that the right-hand side of lemma \ref{minimum_rv_characterization} is essentially given by the sum of the goodness of fit of the leadfield $A$ to the noiseless dipolar signal $d_0 = Q L_0 \eta$ and a noise-dependent distortion.
We can now prove one half of theorem \ref{minimizer_in_the_presence_of_noise_theorem}.

\begin{proof}[Proof that $C = \alpha \cdot N^{-1}$ is sufficient for equation (\ref{minimizer_in_the_presence_of_noise_equation}) to hold.]
If $C = \alpha \cdot N^{-1}$, we have
\[
\Trace\left( \left( A^\intercal C A\right)^{-1} A^\intercal C N C A\right) = 3 \alpha,
\]
and hence
\[
\argmin_{\text{$A$ leadfield}} \, \Expected \left[\, \rv_C(d(t), A)\,\right]
= \argmax_{\text{$A$ leadfield}}  \, \|d_0\|_C^2 \cdot \text{GOF}_C(d_0, A)
= L_0.
\]
\end{proof}

Before we start to approach the other half of theorem \ref{minimizer_in_the_presence_of_noise_theorem}, we first introduce some notation. 

\begin{defin}
Let $V, W$ denote normed vector spaces, let $U \subset V$ be open, and let $f : U \rightarrow V$ be a map. We then say that $F$ is \textit{Frechet differentiable}, or simply \textit{differentiable}, in $x \in U$ if there exists a bounded linear operator $T : V \rightarrow W$ such that
\[
f(x + h) = f(x) + T(h) + \varphi(h),
\]
where $\varphi (h)\in o(h)$. Such an operator $T$, if it exists, is unique. We then write $T = Df(x)$ and call $Df(x)$ the \textit{derivative} of $f$ at $x$.
\end{defin}

The usual properties of derivatives, to the extent that they make sense in general normed spaces, carry over to Frechet derivatives, see e.g. \cite{dieudonne_foundations_of_modern_analysis}. One can for example show that the chain rule holds. Furthermore, a scalar differentiable function can only have a local optimum in $x$ if $Df(x) = 0$. In particular, if $Df(x) \neq 0$ then $f$ can not have a local maximum at $x$.

Our strategy for the second half of theorem \ref{minimizer_in_the_presence_of_noise_theorem} is now as follows. Let $O = \{A \in \mathbb{R}^{N \times 3} | \rank (A) = 3\}$. Then $O \subset \mathbb{R}^{N \times 3}$ is open and dense, and the maps
\begin{equation}
f : O \rightarrow \mathbb{R}; A \mapsto Q^2 \eta^\intercal \left(L_0^\intercal C A \right) \left(A^\intercal C A \right)^{-1} \left(A^\intercal C L_0 \right) \eta
\end{equation}
and
\begin{equation}
\label{noise_functional_definition}
g : O \rightarrow \mathbb{R}; A \mapsto  \Trace \left( \left(A^\intercal C A \right)^{-1} A^\intercal C N C A\right)
\end{equation}
are well defined. We will now investigate under what conditions we have $D(f + g)(L_0) \neq 0$, which by lemma \ref{minimum_rv_characterization} will lead to a proof of theorem \ref{minimizer_in_the_presence_of_noise_theorem}.

First, note that as derived above we have $f(A) = \|d_0\|_C^2 \cdot \text{GOF}_C(d_0, A)$, and hence $f(A)$ has a local maximum at $L_0$. Thus we must have $Df(L_0) = 0$. Furthermore, a straightforward computation reveals that for $A, B \in \mathbb{R}^{N \times 3}$ we have
\begin{equation}
\label{derivative_noise_distortion}
Dg(A)(B) = 2 \cdot \Trace \left( B \cdot \left( \left(A^\intercal C A\right)^{-1} A^\intercal C N C - \left(A^\intercal C A\right)^{-1} A^\intercal C N C A \left(A^\intercal C A\right)^{-1} A^\intercal C \right)\right).
\end{equation}

In the appendix, we show the calculation of this derivative.

\begin{lemma}
The following are equivalent.
\begin{enumerate}
\item $Dg(A) = 0$
\item $C \cdot N \cdot \Ker(A^\intercal) = \Ker(A^\intercal)$
\end{enumerate}
\end{lemma}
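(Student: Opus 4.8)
The plan is to first reduce the operator equation $Dg(A)=0$ to the vanishing of a single matrix, and then to translate that matrix equation into the claimed invariance of $\Ker(A^\intercal)$ under $C N$.

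First I would observe that the derivative (\ref{derivative_noise_distortion}) has the form $Dg(A)(B) = 2\,\Trace(B M)$, where $M = (A^\intercal C A)^{-1} A^\intercal C N C - (A^\intercal C A)^{-1} A^\intercal C N C A (A^\intercal C A)^{-1} A^\intercal C \in \mathbb{R}^{3\times N}$ is the bracketed factor. Since the trace pairing $(B,M)\mapsto \Trace(BM)$ on $\mathbb{R}^{N\times 3}\times\mathbb{R}^{3\times N}$ is nondegenerate (testing against matrices $B$ with a single nonzero entry recovers each entry of $M$), the condition $Dg(A)=0$ holds if and only if $M=0$. Factoring the invertible matrix $(A^\intercal C A)^{-1}$ out on the left and collecting the common factor $A^\intercal C N C$ on the right, I would write $M = (A^\intercal C A)^{-1} A^\intercal C N C\,(\Id - \Pi)$, where $\Pi = A (A^\intercal C A)^{-1} A^\intercal C$. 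Hence $M=0$ is equivalent to $A^\intercal C N C\,(\Id - \Pi) = 0$.

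Next I would identify $\Pi$ geometrically. A direct check gives $\Pi^2 = \Pi$ and $\Pi A = A$, so $\Pi$ is the $C$-orthogonal projection onto $\Image(A)$, and its complement $\Id - \Pi$ projects onto $\Ker(\Pi) = \Ker(A^\intercal C)$ (using that $A$ has full column rank on $O$, so that $A^\intercal C x = 0$ exactly characterizes $\Ker\Pi$). Because $C$ is invertible, $\Ker(A^\intercal C) = C^{-1}\Ker(A^\intercal)$. Therefore $A^\intercal C N C\,(\Id - \Pi) = 0$ says precisely that $A^\intercal C N C$ annihilates $C^{-1}\Ker(A^\intercal)$; substituting a general element $C^{-1} z$ with $z \in \Ker(A^\intercal)$ collapses the adjacent factors $C\,C^{-1}$ and reduces the condition to $A^\intercal C N z = 0$ for every $z \in \Ker(A^\intercal)$, that is, to the inclusion $C N\,\Ker(A^\intercal) \subseteq \Ker(A^\intercal)$.

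Finally I would upgrade this inclusion to the equality asserted in statement (2). This is where the standing invertibility of $C$ and $N$ is essential: the product $C N$ is then a linear automorphism of $\mathbb{R}^N$, so it maps $\Ker(A^\intercal)$ injectively into itself, forcing $\dim\bigl(C N\,\Ker(A^\intercal)\bigr) = \dim\Ker(A^\intercal)$ and hence $C N\,\Ker(A^\intercal) = \Ker(A^\intercal)$; the reverse implication from (2) to the inclusion is trivial, so the chain of equivalences closes and (1)$\Leftrightarrow$(2) follows. I expect the main obstacle to be bookkeeping rather than conceptual, namely passing cleanly between $\Ker(A^\intercal C)$ and $\Ker(A^\intercal)$ through the factor $C^{-1}$, and recognizing that invertibility of $C N$ is exactly what promotes the subspace inclusion to a genuine equality.
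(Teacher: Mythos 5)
Your proposal is correct and follows essentially the same route as the paper: reduce $Dg(A)=0$ to the vanishing of the bracketed matrix via nondegeneracy of the trace pairing, translate that into $A^\intercal C N$ annihilating $\Ker(A^\intercal)$, and promote the resulting inclusion to an equality by a dimension count using invertibility of $CN$. The only cosmetic difference is that you phrase both directions at once through the oblique projector $\Pi = A(A^\intercal C A)^{-1}A^\intercal C$ and its kernel $C^{-1}\Ker(A^\intercal)$, whereas the paper argues the two implications separately using the equivalent splitting $\mathbb{R}^N = \Ker(A^\intercal)\oplus\Image(CA)$.
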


\begin{proof}
Assume $Dg(A) = 0$.
First note that $\Trace(B \cdot M) = 0$ for all $B \in \mathbb{R}^{N \times 3}$ implies $M = 0$, as can be seen from taking $B = M^\intercal$ and remembering that the trace of a positive semidefinite matrix is the sum of its non-negative eigenvalues. Hence, equation (\ref{derivative_noise_distortion}) implies
\[
A^\intercal C N \left(E_N - C A \left(A^\intercal C A \right)^{-1} A^\intercal \right) = 0
\]
Hence, for $v \in \Ker(A^\intercal)$ we have $A^\intercal C N v = 0$, and hence $C \cdot N \cdot \Ker(A^\intercal) \subset \Ker(A^\intercal)$. Since $\dim(C \cdot N \cdot \Ker(A^\intercal)) = \dim(\Ker(A^\intercal))$, we must in fact have equality.

On the other hand, assume that we have $C \cdot N \cdot \Ker(A^\intercal) = \Ker(A^\intercal)$. Then for $v \in \Ker(A^\intercal)$ we have
\[
A^\intercal C N \left(E_N - C A \left(A^\intercal C A \right)^{-1} A^\intercal \right) \cdot v = 0.
\]
If we now have $v \in \Image(C A)$, we can write $v = C A z$. But then
\[
A^\intercal C N \left(E_N - C A \left(A^\intercal C A \right)^{-1} A^\intercal \right) \cdot v
= A^\intercal C N \left(C A z - C A z\right) = 0.
\]
Since $\mathbb{R}^N = \Ker(A^\intercal) \bigoplus \Image(C A)$, this implies
\[
A^\intercal C N \left(E_N - C A \left(A^\intercal C A \right)^{-1} A^\intercal \right) = 0,
\]
which by equation (\ref{derivative_noise_distortion}) implies $Dg(A) = 0$.
\end{proof}

\begin{proof}[Proof of the second half of theorem \ref{minimizer_in_the_presence_of_noise_theorem}]
In particular, we thus see that $D(f + g)(L_0) = 0$ if and only if 
\[
C \cdot N \cdot \Ker(L_0^\intercal) = \Ker(L_0^\intercal),
\]
 i.e. if the kernel of $L_0^\intercal$ is an invariant subspace of $C \cdot N$. If $C \neq \alpha \cdot N^{-1}$ for all $\alpha > 0$, and we make no further assumption on $L_0$, we can not guarantee that $C \cdot N \cdot \Ker(L_0^\intercal) = \Ker(L_0^\intercal)$, and hence we cannot guarantee that $D(f + g)(L_0) = 0$. But if $D(f + g)(L_0) \neq 0$, there is some $\widetilde{L} \neq L$ such that $(f + g)(\widetilde{L}) > (f + g)(L_0)$, and hence
 \[
 L_0 \neq \argmin_{\text{$A$ leadfield}} \, \Expected \left[\, \rv_C(d(t), A)\,\right].
 \]
\end{proof}

\begin{rem}
In the proof of theorem \ref{minimum_rv_characterization}, we derived the equality
\[
\argmin_{\text{$A$ leadfield}} \, \Expected \left[\, \rv_C(d(t), A)\,\right]  = \argmax_{\text{$A$ leadfield}} \,\Expected \left[ \|j^{\text{\,sLORETA}}_{C}(d(t), A)\|^2 \right].
\]
When the authors in \cite{sekihara_neuroimage_loc, greenblatt_ung_unbiased, pascualmarqui_eloreta} say sLORETA is either biased or unbiased, what they mean is whether the equation
\[
L_0 =  \argmax_{\text{$A$ leadfield}} \,\Expected \left[ \|j^{\text{\,sLORETA}}_{C}(d(t), A)\|^2 \right]
\]
holds or not. We thus see that the corresponding results in these papers are indeed special cases of theorem \ref{minimizer_in_the_presence_of_noise_theorem}.
\end{rem}

\section{Beamforming and dipole scanning}

\subsection{Introduction to beamforming}

We have seen above that any member of the generalized sLORETA family of reconstruction approaches described in \cite{pascualmarqui_eloreta} can be described as a single dipole scan. In particular, this gives an intuitive and rigorous explanation of why sLORETA can reconstruct single sources in noiseless scenarios. 

In particular, even though sLORETA was derived using a statistical argument, it turned out that its success could be traced back to a least squares fit. With this in mind, it is natural to ask if other reconstruction approaches can also be related to least squares fits in a similar manner.

One famous class of reconstruction approaches are so-called \textit{beamformers}, or \textit{adaptive spatial filters} \cite{sekihara_beamforming}. In EEG and MEG, the underlying idea of these approaches is that, given some dipolar source of interest, one wants to create a linear functional which, when applied to the data, suppresses contributions that do not originate from the dipole, while letting contributions from this dipole pass through. In the following, we want to focus on the class of \textit{minimum variance beamformers}.

\begin{defin}[Scalar beamformer]
Let $l \in \mathbb{R}^{n}$ be the leadfield vector from the dipole of interest. Let $d$ be the data and $R = \mathbb{E}[d d^\intercal]$ its covariance matrix\footnote{Note that $R$ is, strictly speaking, the second moment matrix of the signal $d$. But as is common in the beamforming literature, we have followed the convention of \cite{sekihara_beamforming} and used the term covariance matrix.}. Furthermore, let $\tau > 0$ be a parameter. Then a scalar beamformer $w_{\tau}(l)$ is defined by
\[
w_{\tau}(l) = \argmin_{w^\intercal l = \tau} \, w^\intercal R w.
\]
\end{defin}
Using Lagrange multipliers, one can compute that 
\begin{equation}
\label{filter_explicit_formula}
w_{\tau}(l) = \tau \cdot \frac{R^{-1} l}{l^\intercal R^{-1} l}.
\end{equation}

The motivation behind this definition is that, while the side constraint ensures that the signal of interest is passed, the minimization of $w^\intercal R w = \mathbb{E}[(w^\intercal d)^2]$ forces the filter to adaptively learn what other contributions are present in the data, and how these contributions can be suppressed.

Different beamformers now correspond to different choices of $\tau$. In particular, for a fixed $l$ the beamformers only differ by a multiplicative constant. If we thus have data samples $(d_1, ..., d_T)$ and apply a beamformer to get an estimation of the time course of the dipolar source with leadfield $l$, the reconstructed timecourses $(w_\tau(l)^\intercal d_1, ..., w_\tau(l)^\intercal d_T)$ for different beamformers also only differ by a multiplicative constant. 

The real benefit of introducing the parameter $\tau$ comes when comparing the beamformer output for different dipolar sources. Assume you want to reconstruct the position and orientation of a dipolar source that was active during the measurement. The common heuristic in beamforming is to estimate the source position-orientation pair $(\hat{x}_0, \hat{\eta}_0)$ by scanning the source space and maximizing the expected power of the post-filter signal, or more concretely
\[
(\hat{x}_0, \hat{\eta}_0) = \argmax_{x, \eta} \, \mathbb{E}[\left(w_\tau(l(x, \eta))^\intercal d\right)^2],
\]
where $l(x, \eta) \in \mathbb{R}^N$ denotes the leadfield vector for a source at position $x$ with orientation $\eta$.

Using (\ref{filter_explicit_formula}), we can compute
\[
P_\tau(l) := \mathbb{E}[\left(w_\tau(l(x, \eta))^\intercal d\right)^2] = \frac{\tau^2}{l^\intercal R^{-1} l}.
\]

We now want to introduce the common strategies for choosing $\tau$.

\begin{defin}[UG, AG, NAI, UNG, SAM]
Let $l \in \mathbb{R}^N$ denote an arbitrary leadfield vector. Furthermore, let $N$ denote the noise covariance matrix.
\begin{enumerate}
\item The \textit{Unit Gain (UG)} beamformer corresponds to the choice $\tau = 1$. Its expected output power is given by
\[
P_{\text{UG}}(l) := P_1(l) = \frac{1}{l^\intercal R^{-1} l}
\] 
\item The \textit{Neural Activity Index (NAI)} beamformer corresponds to the choice $\tau = \|N^{-\frac{1}{2}} l\|_2$. Its expected output power is given by
\begin{equation}
\label{scalar_nai}
P_{\text{NAI}}(l) := P_{\|N^{-\frac{1}{2}} l\|_2}(l) = \frac{l^\intercal N^{-1} l}{l^\intercal R^{-1} l}.
\end{equation}
In the special case $N = \sigma^2 \Id$, this choice is also called \textit{Array Gain (AG)} beamformer.
\item The \textit{Synthetic Aperture Magnetometry (SAM)} beamformer corresponds to the choice $\tau = \frac{l^\intercal R^{-1} l}{\sqrt{l^\intercal R^{-1} N R^{-1} l}}$. Its expected output power is given by
\begin{equation}
\label{scalar_sam}
P_{\text{SAM}}(l) := P_{\frac{l^\intercal R^{-1} l}{\sqrt{l^\intercal R^{-1} N R^{-1} l}}}(l) = \frac{l^\intercal R^{-1} l}{l^\intercal R^{-1} N R^{-1} l}.
\end{equation}
In the special case $N = \sigma^2 \Id$, this choice is also called \textit{Unit Noise Gain (UNG)} beamformer.
\end{enumerate}
\end{defin}

We want to make a few comments on this definition.

\begin{rem}
\begin{enumerate}
\item In the literature, the SAM beamformer is typically defined in an equivalent, but slightly different form. A straightforward computation using equation (\ref{filter_explicit_formula}) shows that for an arbitrary $\tau > 0$ we have
\[
\frac{w_\tau(l)^\intercal R w_\tau(l)}{w_\tau(l)^\intercal N w_\tau(l)} = P_{\text{SAM}}(l).
\]
The left-hand side of this equation is commonly called \textit{pseudo-Z} score and is used to define the SAM beamformer. It can be motivated as ``projected signal power divided by projected noise power'', and SAM beamforming can thus be seen as maximizing some measure of SNR.
\item The value $P_{\text{NAI}}(l)$ is typically called the neural activity index, hence the name of the beamformer. Note that $P_{\text{NAI}}$ can also be interpreted as a measure of SNR.
\item In the definition above, we have phrased the AG and UNG beamformers as special cases of the NAI and SAM beamformers. In particular, it follows that results shown to hold for NAI and SAM beamformers for arbitrary noise covariance matrices, are also valid for AG and UNG beamformers under the assumption of white noise, i.e. for noise covariance matrices of the form $N = \sigma^2 \Id$. But we could just as well have started with defining AG and UNG beamformers for white noise, and then defined NAI beamforming to be AG beamforming applied to pre-whitened data, and SAM beamforming to be UNG beamforming applied to pre-whitened data. Using this, it follows that results shown to hold for AG and UNG under the assumption of white noise directly carry over to results for NAI and SAM for arbitrary noise covariance matrices. From a theoretical perspective, it thus makes no difference if one investigates AG and UNG beamformers under the assumption of white noise, or NAI and SAM beamformers for arbitrary noise covariance matrices.
\end{enumerate}
\end{rem}

Note that the UG beamformer is biased and can reconstruct neither the position nor the orientation of a dipole \cite{sekihara_beamforming, buschermoehle_beamforming}, and we will hence not study it further here. On the other hand, both NAI as well as SAM beamforming can be shown to produce an unbiased estimation of single dipolar sources \cite{moiseev_mcmv_beamforming}. They will be the focus of the following.

The scalar beamformers described above are tailored to a single active source with a fixed orientation. A natural question is then how this can be generalized to multiple active sources, with possibly non-fixed orientation. One such generalization is given in \cite{moiseev_mcmv_beamforming}, and goes as follows.

\begin{defin}[NAI and SAM Vector beamformer]
Let $N$ be the number of sensors. Let $L \in \mathbb{R}^{N \times k}$ denote a matrix whose columns contain the leadfield vectors of suspected active dipoles. Let $R$ and $N$ again denote the signal and noise covariance matrices. Then the vector version of the NAI beamformer power is defined as
\begin{equation}
\label{vector_nai}
P_{\text{NAI}}(L) = \Trace \left( \left(L^\intercal N^{-1} L\right) \left(L^\intercal R^{-1} L \right)^{-1} \right)
\end{equation}
and the vector version of the SAM beamformer power is defined as
\begin{equation}
\label{vector_sam}
P_{\text{SAM}}(L) = \Trace \left( \left( L^\intercal R^{-1} L\right) \left(L^\intercal R^{-1} N R^{-1} L \right)^{-1}\right).
\end{equation}
Note that for $k = 1$ these equations indeed reduce to (\ref{scalar_nai}) and (\ref{scalar_sam}).
\end{defin}

In \cite{moiseev_mcmv_beamforming}, the authors prove that searching for sets of leadfield matrices $L$ such that $P_\text{NAI}(L) - k$ (resp. $P_{\text{SAM}}(L) - k$) is maximal provides an unbiased reconstruction.\footnote{Note that $k$, i.e. the number of columns of $L$, is only relevant for finding maximizers if one includes matrices of different sizes in the search procedure.} Concretely, they show that both of these expressions attain their maximum value if the range of the matrix $L$ contains the span of the leadfield vectors of the active sources.

We want to note that there is another contender for the vector generalization of the NAI. In \cite{van_veen_lcmv_beamforming}, the authors introduce the value
\begin{equation}
\label{nai_van_veen_definition}
\widetilde{\text{NAI}}(L) = \frac{\Trace\left( \left(L^\intercal R^{-1} L\right)^{-1}\right)}{\Trace \left(\left(L^\intercal N^{-1} L\right)^{-1} \right)}
\end{equation}
as neural activity index. In theorem \ref{original_nai_biased_theorem}, we will show that this definition in general leads to a bias in the reconstruction. Hence, we argue that (\ref{vector_nai}) is a more suitable vector version of the neural activity index.

\subsection{Connection to dipole scanning}
In the following, we will again assume that the data $d$ can be described using a noisy single source setting
\[
d =  s \cdot l_0 + n,
\]
where $s$ is the source amplitude, $l_0 \in \mathbb{R}^N$ is the true leadfield vector, and $n$ is additive noise. We again assume that $s$ is uncorrelated to the components of $n$, that $s$ and $n$ are square integrable, and that $n$ is zero mean. A straightforward computation then shows that 
\begin{equation}
R := \mathbb{E}\left[d d^\intercal \right] =  N + x x^\intercal,
\end{equation}
where $N = \mathbb{E}\left[n n^\intercal \right]$ is the noise covariance matrix and $x = \sqrt{\mathbb{E}\left[s^2\right]} \cdot l_0$. 

Notice that, if $R$ and $N$ are known, one can reconstruct the leadfield vector $l_0$, up to a scaling factor. For example, we have
\[
N^{-\frac{1}{2}} R N^{-\frac{1}{2}} = \Id + \left( N^{-\frac{1}{2}} x\right) \cdot \left( N^{-\frac{1}{2}} x\right)^\intercal,
\]
and hence $\mathbb{R} \cdot N^{-\frac{1}{2}} x$ is the eigenspace of $N^{-\frac{1}{2}} R N^{-\frac{1}{2}}$ corresponding to the largest eigenvalue. If we thus know $R$ and $N$, we also know $\mathbb{R} l_0$, and hence also $\text{GOF}_C(l_0, A)$ for an arbitrary positive definite matrix $C$ and leadfield matrix $A \in \mathbb{R}^{n \times k}$.\footnote{Note that $\text{GOF}_C(\lambda \cdot l_0, A) = \text{GOF}_C(l_0, A)$ for all $\lambda \neq 0$, as follows e.g. from lemma \ref{gof_expression}. }

A natural approach for source reconstruction is now given as follows.
\begin{enumerate}
\item Given $R$ and $N$, compute an arbitrary vector $0 \neq v \in \mathbb{R} \cdot l_0$.
\item Using this vector, compute $\text{GOF}_C(l_0, A)$ for a set of candidate leadfields $A$. 
\item Search for positions where this value is large.
\end{enumerate}

It is obvious that this provides an unbiased reconstruction of the underlying source, and the maximum value of $1$ is attained if and only if $l_0$ is contained in the range of the candidate leadfield $A$. Furthermore, note that this procedure is essentially a dipole scan with the data vector derived from the covariance matrices $R$ and $N$.

Our claim is now that SAM and NAI beamforming is equivalent to this dipole scanning approach, both in the scalar and the vector case. As a corollary, this also implies that SAM and NAI beamforming are equivalent in a noisy single source setting.

The formal statement goes as follows.

\begin{theorem}
\label{sam_and_nai_dipole_scanning}
There are strictly monotonically increasing functions $f_{x, N}, g_{x, N} : [0, 1] \rightarrow R$ such that for all full rank $L \in \mathbb{R}^{n \times k}$ we have
\begin{equation}
P_{\text{NAI}}(L) = k + f_{x, N}\left(\text{GOF}_{N^{-1}}(l_0, L)\right)
\end{equation}
and
\begin{equation}
P_{\text{SAM}}(L) = k + g_{x, N}\left(\text{GOF}_{N^{-1}}(l_0, L)\right)
\end{equation}
Concretely, these functions are given by
\[
f_{x, N}(t) = \frac{\mu_{x, N} \cdot t}{1 - \mu_{x, N} \cdot t}
\]
and 
\[
g_{x, N}(t) = \frac{1}{\langle N^{-1} x, x\rangle + 2} \cdot \left(\frac{1}{1 - \rho_{x, N} \cdot t}  - 1\right),
\]
where the constants are given by
\[
\mu_{x, N} = \frac{\langle N^{-1} x, x\rangle}{1 + \langle N^{-1} x, x\rangle}
\]
and
\[
\rho_{x, N} = \frac{\langle N^{-1} x, x\rangle^2 + 2 \langle N^{-1} x, x\rangle}{\left(\langle N^{-1} x, x\rangle + 1\right)^2}.
\]
Note that $0 < \mu_{x, N}, \rho_{x, N} < 1$.
\end{theorem}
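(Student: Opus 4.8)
The plan is to reduce everything to rank-one update algebra via the Sherman--Morrison formula. Since $R = N + xx^\intercal$, I would first record
\[
R^{-1} = N^{-1} - \frac{1}{1+\gamma}\, N^{-1} x x^\intercal N^{-1}, \qquad \gamma := \langle N^{-1}x, x\rangle > 0,
\]
where positivity of $\gamma$ follows from $x \neq 0$ and positive definiteness of $N^{-1}$. The conceptual heart of the argument is the next step: introduce the abbreviations $M := L^\intercal N^{-1} L$ (positive definite because $L$ has full rank) and $u := L^\intercal N^{-1} x \in \mathbb{R}^k$, and observe that
\[
u^\intercal M^{-1} u = \gamma \cdot \text{GOF}_{N^{-1}}(l_0, L).
\]
This identity is immediate from Lemma \ref{gof_expression} with $C = N^{-1}$, $d = l_0$, $A = L$, together with the fact that $x$ is a positive scalar multiple of $l_0$, so the scaling factors cancel. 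Writing $t := \text{GOF}_{N^{-1}}(l_0, L)$ and $\beta := u^\intercal M^{-1} u = \gamma t$ then turns the whole problem into bookkeeping in the single scalar $t$.

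For the NAI power I would use $L^\intercal R^{-1} L = M - \tfrac{1}{1+\gamma} u u^\intercal$, apply Sherman--Morrison a second time to invert this, and exploit the cyclic property of the trace. Since $M (L^\intercal R^{-1}L)^{-1} = \Id + (\text{const})\, u u^\intercal M^{-1}$, taking the trace produces $k$ plus a term proportional to $\Trace(uu^\intercal M^{-1}) = \beta$. A short simplification gives $P_{\text{NAI}}(L) = k + \tfrac{\beta}{1 + \gamma - \beta}$, and substituting $\beta = \gamma t$ and dividing through by $1+\gamma$ yields exactly $k + f_{x,N}(t)$ with $\mu_{x,N} = \gamma/(1+\gamma)$.

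For SAM the only genuine extra work is computing the sandwiched matrix $R^{-1} N R^{-1}$. Expanding the product using the Sherman--Morrison form of $R^{-1}$ and repeatedly collapsing $x^\intercal N^{-1} x = \gamma$ gives the clean rank-one expression
\[
R^{-1} N R^{-1} = N^{-1} - \frac{2+\gamma}{(1+\gamma)^2}\, N^{-1} x x^\intercal N^{-1},
\]
so that $L^\intercal R^{-1} N R^{-1} L = M - b\, uu^\intercal$ with $b = (2+\gamma)/(1+\gamma)^2$, while $L^\intercal R^{-1} L = M - a\, uu^\intercal$ with $a = 1/(1+\gamma)$. Running the same Sherman--Morrison-plus-trace computation then gives $P_{\text{SAM}}(L) = k + \tfrac{\beta(b-a)}{1 - b\beta}$. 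The pleasant cancellation to verify is $b - a = 1/(1+\gamma)^2$; combined with $b\beta = \rho_{x,N}\, t$ this reproduces $k + g_{x,N}(t)$.

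Finally I would dispatch the elementary claims. Positivity and boundedness, $0 < \mu_{x,N}, \rho_{x,N} < 1$, follow from $\gamma > 0$ together with $(1+\gamma)^2 = \gamma^2 + 2\gamma + 1 > \gamma^2 + 2\gamma$; these same inequalities keep the denominators $1 - \mu_{x,N} t$ and $1 - \rho_{x,N} t$ strictly positive for $t \in [0,1]$, so $f_{x,N}$ and $g_{x,N}$ are well-defined and strictly increasing there. I expect no real obstacle beyond disciplined rank-one bookkeeping; the one place to be careful is the expansion of $R^{-1} N R^{-1}$, where the central $N$ must be cancelled against the surrounding $N^{-1}$ factors in the right order before the scalar $\gamma$ is collapsed.
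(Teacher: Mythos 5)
Your proposal is correct and follows essentially the same route as the paper: Sherman--Morrison applied to $R = N + xx^\intercal$ to obtain the rank-one forms of $R^{-1}$ and $R^{-1}NR^{-1}$ (your coefficients $a = 1/(1+\gamma)$ and $b = (2+\gamma)/(1+\gamma)^2$ are exactly the paper's $\mu_{x,N}/\langle N^{-1}x,x\rangle$ and $\rho_{x,N}/\langle N^{-1}x,x\rangle$), followed by a second Sherman--Morrison inversion, the cyclic trace identity, and the reduction $u^\intercal M^{-1}u = \gamma\,\text{GOF}_{N^{-1}}(l_0,L)$ from Lemma \ref{gof_expression}. The only differences are notational, and your explicit treatment of the scale-invariance of the GOF in $x$ versus $l_0$ and of the monotonicity of $f_{x,N}$ and $g_{x,N}$ is a small bonus over the paper's presentation.
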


\begin{rem}
Note that this immediately implies that in a noisy single source setting, both NAI as-well as SAM produce an unbiased estimation of the source, i.e. they are maximized by leadfields $L$ whose range contains the true leadfield vector. Furthermore, this shows that there is a monotonically increasing function $h$ such that $P_{\text{SAM}}(L) = h(P_{\text{NAI}}(L))$, and we see that the SAM pseudo-Z score and the neural activity index can be transformed into another via using a bijective transformation. They thus contain the same amount of information.
\end{rem}

We will now work towards a proof of theorem \ref{sam_and_nai_dipole_scanning}. During the following computations, we will make use of the equation from lemma \ref{gof_expression}
\[
\|d\|_C^2 \cdot \text{GOF}_C(d, A) = d^\intercal C A\left(A^\intercal C A \right)^{-1} A^\intercal C d
\]
multiple times.
\begin{lemma}
\label{r_inv_lemma}
We use the definitions from theorem (\ref{sam_and_nai_dipole_scanning}). Then
\begin{align*}
R^{-1} N R^{-1} &= N^{-1} - \frac{\rho_{x, N}}{\langle N^{-1} x, x\rangle} \left(N^{-1} x\right)  \left(N^{-1} x\right)^\intercal \\
R^{-1} &= N^{-1} - \frac{\mu_{x, N}}{\langle N^{-1} x, x\rangle}  \left(N^{-1} x\right)  \left(N^{-1} x\right)^\intercal
\end{align*}
\end{lemma}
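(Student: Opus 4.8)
The plan is to observe that $R = N + x x^\intercal$ is a rank-one perturbation of the invertible matrix $N$, so that both identities follow from the Sherman--Morrison formula together with some elementary bookkeeping of scalar coefficients. To streamline the algebra I would set $a := \langle N^{-1} x, x\rangle = x^\intercal N^{-1} x$ and $u := N^{-1} x$, so that the two claimed right-hand sides become $N^{-1} - (\rho_{x,N}/a)\, u u^\intercal$ and $N^{-1} - (\mu_{x,N}/a)\, u u^\intercal$.

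First I would handle the expression for $R^{-1}$. Applying Sherman--Morrison to $R = N + x x^\intercal$ gives
\[
R^{-1} = N^{-1} - \frac{(N^{-1} x)(N^{-1} x)^\intercal}{1 + x^\intercal N^{-1} x} = N^{-1} - \frac{1}{1+a}\, u u^\intercal .
\]
Since $\mu_{x,N}/a = \tfrac{1}{a}\cdot\tfrac{a}{1+a} = \tfrac{1}{1+a}$, this is precisely the second identity. Next, for $R^{-1} N R^{-1}$ I would substitute this expression and expand, writing $c := 1/(1+a)$:
\[
R^{-1} N R^{-1} = \left(N^{-1} - c\, u u^\intercal\right) N \left(N^{-1} - c\, u u^\intercal\right).
\]
Using $N^{-1} N = \Id$ and the scalar identity $u^\intercal N u = x^\intercal N^{-1} N N^{-1} x = x^\intercal N^{-1} x = a$, the two cross terms each contribute $-c\, u u^\intercal$ while the quadratic term contributes $+c^2 a\, u u^\intercal$, so everything collapses to $N^{-1} - (2c - c^2 a)\, u u^\intercal$. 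It then remains only to verify the scalar identity $2c - c^2 a = \rho_{x,N}/a$, i.e. that $\tfrac{2+a}{(1+a)^2}$ equals $\tfrac{a+2}{(1+a)^2}$, which is immediate.

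The computation presents no genuine obstacle: the Sherman--Morrison formula does all the structural work, and the only care required is in matching the scalar prefactors to the constants $\mu_{x,N}$ and $\rho_{x,N}$ fixed in the theorem statement. Finally, the asserted bounds $0 < \mu_{x,N}, \rho_{x,N} < 1$ follow from $a = \langle N^{-1} x, x\rangle > 0$ (as $N^{-1}$ is positive definite and $x = \sqrt{\mathbb{E}[s^2]}\, l_0 \neq 0$), since $\mu_{x,N} = a/(1+a)$ and $\rho_{x,N} = 1 - 1/(1+a)^2$ are then both strictly between $0$ and $1$.
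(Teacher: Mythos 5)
Your proposal is correct and is exactly the ``straightforward computation using the Sherman--Morrison formula'' that the paper's proof invokes without writing out: Sherman--Morrison gives $R^{-1}$, direct expansion gives $R^{-1}NR^{-1}$, and the scalar prefactors match $\mu_{x,N}$ and $\rho_{x,N}$ as you verify. Nothing to add.
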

\begin{proof}
Since $R = N + x x^\intercal$, this follows from a straightforward computation using the Sherman-Morrison formula\footnote{\url{https://en.wikipedia.org/wiki/Sherman\%E2\%80\%93Morrison_formula}}.
\end{proof}

\begin{proof}[Proof of Theorem \ref{sam_and_nai_dipole_scanning}]
We start with the neural activity index. Using lemma \ref{r_inv_lemma}, we have for $L \in \mathbb{R}^{n \times k}$
\begin{equation}
\label{reduced_R}
L^\intercal R^{-1} L = L^\intercal N^{-1} L - \frac{\mu_{x, N}}{\langle N^{-1} x, x\rangle} \left( L^\intercal N^{-1} x\right)  \left( L^\intercal N^{-1} x\right)^\intercal,
\end{equation}
and thus
\[
L^\intercal N^{-1} L = L^\intercal R^{-1} L + \frac{\mu_{x, N}}{\langle N^{-1} x, x\rangle} \left( L^\intercal N^{-1} x\right)  \left( L^\intercal N^{-1} x\right)^\intercal.
\]
This implies
\[
\left( L^\intercal N^{-1} L\right) \cdot \left( L^\intercal R^{-1} L\right)^{-1}
= \Id_k + \frac{\mu_{x, N}}{\langle N^{-1} x, x\rangle} \left( L^\intercal N^{-1} x\right)  \left( L^\intercal N^{-1} x\right)^\intercal \cdot \left( L^\intercal R^{-1} L\right)^{-1}.
\]
Using $\Trace(A B) = \Trace(B A)$, we then have
\[
P_{\text{NAI}}(L) = \Trace\left( \left( L^\intercal N^{-1} L\right) \cdot \left( L^\intercal R^{-1} L\right)^{-1} \right)
= k +  \frac{\mu_{x, N}}{\langle N^{-1} x, x\rangle} \cdot  \left( L^\intercal N^{-1} x\right)^\intercal \cdot \left( L^\intercal R^{-1} L\right)^{-1} \cdot L^\intercal N^{-1} x.
\]
Applying the Sherman-Morrison formula to (\ref{reduced_R}), we have
\[
\left( L^\intercal R^{-1} L\right)^{-1} = \left( L^\intercal N^{-1} L\right)^{-1} + \frac{\mu_{x, N}}{\langle N^{-1} x, x\rangle}  
\cdot \frac{\left( L^\intercal N^{-1} L\right)^{-1} L^\intercal N^{-1} x \left(L^\intercal N^{-1} x \right)^\intercal \left( L^\intercal N^{-1} L\right)^{-1}}{1 - \mu_{x, N} \cdot \text{GOF}_{N^{-1}}(x, L) }
\]
Putting the last two equations together, we get
\begin{align*}
P_{\text{NAI}}(L) - k &= \mu_{x, N} \cdot \text{GOF}_{N^{-1}}(x, L) + \left(\frac{\mu_{x, N}}{\langle N^{-1} x, x\rangle}\right)^2 \cdot \frac{\left(\langle N^{-1} x, x\rangle  \text{GOF}_{N^{-1}}(x, L)\right)^2}{1 - \mu_{x, N} \cdot  \text{GOF}_{N^{-1}}(x, L)}) \\
&= \mu_{x, N} \cdot  \text{GOF}_{N^{-1}}(x, L) \cdot \left(1 + \frac{ \mu_{x, N} \cdot \text{GOF}_{N^{-1}}(x, L)}{1 - \mu_{x, N} \cdot  \text{GOF}_{N^{-1}}(x, L)} \right) \\
&= \frac{\mu_{x, N} \cdot  \text{GOF}_{N^{-1}}(x, L)}{1 - \mu_{x, n} \cdot  \text{GOF}_{N^{-1}}(x, L)}
= f_{x, N}(\text{GOF}_{N^{-1}}(x, L)).
\end{align*}
Continuing with the SAM beamformer, we have by lemma \ref{r_inv_lemma} that
\[
L^\intercal R^{-1} N R^{-1} L = L^\intercal N^{-1} L - \frac{\rho_{x, N}}{\langle N^{-1} x, x\rangle} \left(L^\intercal N^{-1} x \right) \left(L^\intercal N^{-1} x \right)^\intercal
\]
and 
\[
L^\intercal R^{-1} L = L^\intercal N^{-1} L - \frac{\mu_{x, N}}{\langle N^{-1} x, x\rangle} \left(L^\intercal N^{-1} x \right) \left(L^\intercal N^{-1} x \right)^\intercal,
\]
and thus
\[
L^\intercal R^{-1} L = L^\intercal R^{-1} N R^{-1} L + \frac{\rho_{x, N} - \mu_{x, N}}{\langle N^{-1} x, x\rangle} \cdot \left(L^\intercal N^{-1} x \right) \left(L^\intercal N^{-1} x \right)^\intercal.
\]
We can now compute
\begin{align*}
P_{\text{SAM}}(L) &= \Trace \left( \left( L^\intercal R^{-1} L\right) \left(L^\intercal R^{-1} N R^{-1} L \right)^{-1}\right) \\
&= k + \frac{\rho_{x, N} - \mu_{x, N}}{\langle N^{-1} x, x\rangle} \cdot
 \left(L^\intercal N^{-1} x\right)^\intercal  \left(L^\intercal R^{-1} N R^{-1} L \right)^{-1} L^\intercal N^{-1} x.  
\end{align*}
By the Sherman-Morrison formula, we have
\[
 \left(L^\intercal R^{-1} N R^{-1} L \right)^{-1} =
 \left(L^\intercal N^{-1} L\right)^{-1} + \frac{\rho_{x, N}}{\langle N^{-1} x, x\rangle} \cdot
 \frac{ \left(L^\intercal N^{-1} L\right)^{-1} \left(L^\intercal N^{-1} x\right) \left( L^\intercal N^{-1} x\right)^\intercal  \left(L^\intercal N^{-1} L\right)^{-1}}{1 - \rho_{x, N} \cdot \text{GOF}_{N^{-1}}(x, L)},
\]
and thus
\begin{align*}
P_{\text{SAM}}(L) - k &= \left(\rho_{x, N} - \mu_{x, N}\right) \cdot \text{GOF}_{N^{-1}}(x, L)
+ \frac{\rho_{x, N} - \mu_{x, N}}{\langle N^{-1} x, x\rangle} \cdot \frac{\rho_{x, N}}{\langle N^{-1} x, x\rangle}
\cdot \frac{\left(\langle N^{-1} x, x\rangle \cdot \text{GOF}_{N^{-1}}(x, L)\right)^2}{1 - \rho_{x, N} \cdot \text{GOF}_{N^{-1}}(x, L)}\\
&= \left(\rho_{x, N} - \mu_{x, N} \right) \cdot \left(\text{GOF}_{N^{-1}}(x, L) - \frac{\rho_{x, N} \cdot \text{GOF}_{N^{-1}}(x, L)^2}{1 - \rho_{x, N} \cdot \text{GOF}_{N^{-1}}(x, L)} \right)\\
&=  \left(\rho_{x, N} - \mu_{x, N} \right) \cdot \frac{\text{GOF}_{N^{-1}}(x, L)}{1 - \rho_{x, N} \cdot \text{GOF}_{N^{-1}}(x, L)}\\
&= \frac{\rho_{x, N} - \mu_{x, N}}{\rho_{x, N}} \cdot \left( \frac{1}{1 - \rho_{x, N} \cdot \text{GOF}_{N^{-1}}(x, L)} - 1\right)\\
&= \frac{1}{\langle N^{-1} x, x\rangle + 2} \left(  \frac{1}{1 - \rho_{x, N} \cdot \text{GOF}_{N^{-1}}(x, L)} - 1\right)
= g_{x, N}(\text{GOF}_{N^{-1}}(x, L)).
\end{align*}
\end{proof}

Finally, we want to show that the vector version of the neural activity index from \cite{van_veen_lcmv_beamforming} is, in general, biased. 

In \cite{van_veen_lcmv_beamforming}, this value is used as follows. For each position, we look at the corresponding leadfield $L \in \mathbb{R}^{n \times 3}$, whose columns are given by the leadfield vectors corresponding to dipoles at this position with the unit vectors as moments. Then, one computes the value $\widetilde{\text{NAI}}(L)$ as defined in (\ref{nai_van_veen_definition}). The source position is then estimated by searching for large values of $\widetilde{\text{NAI}}(L)$. 

Assume that the signal is originating from position $x_0$ with orientation $\eta_0 \in \mathbb{R}^3$. Denote by $L_0 \in \mathbb{R}^{n \times 3}$ the complete leadfield at this position. By saying that the procedure defined in the last paragraph is \textit{biased}, we mean that, in general, there exists a leadfield $L \in \mathbb{R}^{n \times 3}$ such that
\[
\widetilde{\text{NAI}}(L) > \widetilde{\text{NAI}}(L_0),
\]
i.e. that the map $A \mapsto \widetilde{\text{NAI}}(A)$ do not attain its maximum at the true leadfield.

\begin{theorem}[Biasedness of the original NAI]
\label{original_nai_biased_theorem}
We assume a noisy single source setting, i.e. the data $d \in \mathbb{R}^n$ is given by
\[
d = L_0 \cdot \eta_0 \cdot s + n,
\]
where $s$ is the source amplitude, $\eta_0$ the source orientation, $L_0 \in \mathbb{R}^{n \times 3}$ the leadfield matrix, and $n$ the noise vector. Furthermore, we assume that $L_0$ has full rank, that the source amplitude $s$ is uncorrelated to the components of the noise vector $n$, and that $s$ and $n$ are square integrable. Let $N^{-\frac{1}{2}} L_0 = U S V^\intercal$ be the thin singular value decomposition, and let $S V^\intercal \eta_0 = (v_1, v_2, v_3)$.
Then, if the condition $|v_1| =  |v_2| = |v_3|$ is false, we can find a leadfield $L \in \mathbb{R}^{n \times 3}$ such that
\[
 \widetilde{\text{NAI}}(L) > \widetilde{\text{NAI}}(L_0).
 \]
 
 Since this is almost always the case, we see that $\widetilde{\text{NAI}}$ in general leads to a biased estimation.
\end{theorem}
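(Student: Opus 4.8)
The plan is to reduce to white noise and then exhibit an explicit family of competing leadfields that all fit the source perfectly (goodness of fit $1$) but distribute their ``gain'' differently over the singular directions of $L_0$; restricted to this family $\widetilde{\text{NAI}}$ collapses to a weighted average of $v_1^2, v_2^2, v_3^2$, and the true leadfield corresponds to one particular, generically non-optimal, set of weights.

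First I would pass to whitened coordinates. Writing $\tilde L = N^{-\frac{1}{2}} L$, the noise covariance becomes $\Id$, the signal covariance becomes $\tilde R = \Id + y y^\intercal$ with $y = N^{-\frac{1}{2}} x = c\, U v$ where $c = \sqrt{\mathbb{E}[s^2]}$ (using $N^{-\frac{1}{2}} L_0 \eta_0 = U S V^\intercal \eta_0 = U v$), and $\widetilde{\text{NAI}}(L) = \Trace\big((\tilde L^\intercal \tilde R^{-1} \tilde L)^{-1}\big) / \Trace\big((\tilde L^\intercal \tilde L)^{-1}\big)$, so we may assume $N = \Id$ and $L_0 = U S V^\intercal$. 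As competitors I take the leadfields whose whitened form is $U D V^\intercal$ for an arbitrary invertible diagonal $D = \diag(d_1, d_2, d_3)$: these are full rank, share the column space of $L_0$ (hence have goodness of fit $1$ against the source), and reduce to $L_0$ for $D = S$.

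Next I would compute $\widetilde{\text{NAI}}(U D V^\intercal)$. Here $\tilde L^\intercal \tilde L = V D^2 V^\intercal$, so the denominator is $\sum_i d_i^{-2}$. For the numerator I use $\tilde R^{-1} = \Id - (1+\beta)^{-1} y y^\intercal$ with $\beta = \langle N^{-1} x, x\rangle = c^2\|v\|^2$, exactly as in Lemma \ref{r_inv_lemma}, which gives $\tilde L^\intercal \tilde R^{-1} \tilde L = V D\big(\Id - (1+\beta)^{-1} c^2 v v^\intercal\big) D V^\intercal$. A single Sherman--Morrison inversion then yields $(\tilde L^\intercal \tilde R^{-1}\tilde L)^{-1} = V D^{-1}\big(\Id + c^2 v v^\intercal\big) D^{-1} V^\intercal$, whose trace is $\sum_i d_i^{-2} + c^2 \sum_i v_i^2 d_i^{-2}$. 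Dividing, I obtain the clean expression $\widetilde{\text{NAI}}(U D V^\intercal) = 1 + c^2\, \frac{\sum_i v_i^2 d_i^{-2}}{\sum_i d_i^{-2}}$, i.e. $1 + c^2$ times the weighted average of $v_1^2, v_2^2, v_3^2$ with positive weights $d_i^{-2}$; the true leadfield $L_0$ is the special case of weights $s_i^{-2}$.

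Finally I would invoke the elementary fact that a weighted average of numbers that are not all equal, taken with strictly positive weights, lies strictly below the maximum and can always be strictly increased by shifting weight towards the largest value. Concretely, writing $p_i = s_i^{-2}/\sum_j s_j^{-2}$ and letting $i^*$ be an index with $v_{i^*}^2 = \max_i v_i^2$, the weights $q_i = (1-\varepsilon) p_i + \varepsilon\, \delta_{i i^*}$ satisfy $\sum_i q_i v_i^2 = \sum_i p_i v_i^2 + \varepsilon\,(v_{i^*}^2 - \sum_i p_i v_i^2) > \sum_i p_i v_i^2$ for every $\varepsilon \in (0,1]$, precisely because the $v_i^2$ are not all equal; choosing $d_i = q_i^{-\frac{1}{2}}$ produces an invertible $D$, hence a full-rank leadfield $L$ with $\widetilde{\text{NAI}}(L) > \widetilde{\text{NAI}}(L_0)$, which is the asserted bias. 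The only steps needing care are the whitening reduction and the bookkeeping of the two Sherman--Morrison inverses; the genuinely new idea — and the part I expect to be the crux — is spotting that the van~Veen index degenerates to a weighted average over this comparison family, since once that is seen the biasedness is immediate, and the role of the condition $|v_1| = |v_2| = |v_3|$ (the only configuration in which every weighting gives the same value, so that this particular construction yields no improvement) becomes transparent.
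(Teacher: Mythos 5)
Your proposal is correct and follows essentially the same route as the paper's proof: both compare $L_0$ against the family of leadfields $N^{\frac{1}{2}} U D V^\intercal$ obtained by altering only the singular values, observe that the goodness-of-fit factor is unaffected, and then strictly increase the remaining term by shifting weight towards the larger $|v_i|$ (the paper does this via a trace-preserving $\pm\epsilon$ perturbation of $s_i^{-2}, s_j^{-2}$, you via a convex reweighting towards the maximal $v_{i^*}^2$). Your closed form $\widetilde{\text{NAI}} = 1 + c^2 \sum_i v_i^2 d_i^{-2} / \sum_i d_i^{-2}$ on this family is a clean and correct repackaging of the same computation; just note that you need $\varepsilon < 1$ strictly so that all $q_i > 0$ and $D$ remains invertible.
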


\begin{proof}
In the proof of theorem \ref{sam_and_nai_dipole_scanning} we computed 
\[
\left( L^\intercal R^{-1} L\right)^{-1} = \left( L^\intercal N^{-1} L\right)^{-1} + \frac{\mu_{x, N}}{\langle N^{-1} x, x\rangle}  
\cdot \frac{\left( L^\intercal N^{-1} L\right)^{-1} L^\intercal N^{-1} x \left(L^\intercal N^{-1} x \right)^\intercal \left( L^\intercal N^{-1} L\right)^{-1}}{1 - \mu_{x, N} \cdot \text{GOF}_{N^{-1}}(x, L) },
\]
where $x = \sqrt{\mathbb{E}\left[s^2\right]} \cdot L_0 \cdot \eta_0$. We thus have for an arbitrary leadfield matrix $L \in \mathbb{R}^{n \times 3}$ that
\begin{align*}
\widetilde{\text{NAI}}(L) &= \frac{\Trace\left( \left(L^\intercal R^{-1} L\right)^{-1}\right)}{\Trace \left(\left(L^\intercal N^{-1} L\right)^{-1} \right)} \\
&= 1 + \frac{\mu_{x, N}}{\langle N^{-1} x, x\rangle} \cdot \frac{1}{1 - \mu_{x, N} \cdot  \text{GOF}_{N^{-1}}(x, L)}
\cdot \frac{\left(L^\intercal N^{-1} x \right)^\intercal \left( L^\intercal N^{-1} L\right)^{-2} L^\intercal N^{-1} x}{ \Trace\left( \left( L^\intercal N^{-1} L\right)^{-1} \right)}.
\end{align*}
Since $N^{-\frac{1}{2}} L_0 = U S V^\intercal$ is a thin singular value decompositions, we have $\left(L^\intercal N^{-1} L\right)^{-1} = V S^{-2} V^\intercal$, $\left(L^\intercal N^{-1} L\right)^{-2} = V S^{-4} V^\intercal$, and $\left(N^{-\frac{1}{2}} L\right) \left( L^\intercal N^{-1} L \right)^{-2} \left( N^{-\frac{1}{2}} L\right)^\intercal = U S^{-2} U^\intercal$. Using this, we get
\[
\widetilde{\text{NAI}}(L_0) = 1 +  \frac{\mu_{x, N}}{\langle N^{-1} x, x\rangle} \cdot \frac{1}{1 - \mu_{x, N} \cdot  \text{GOF}_{N^{-1}}(x, L_0)} \cdot
\frac{\left(U^\intercal N^{-\frac{1}{2}} x \right)^\intercal S^{-2} \left(U^\intercal N^{-\frac{1}{2}} x \right)}{\Trace\left(S^{-2} \right)}.
\]
From lemma \ref{gof_expression},  we have $\text{GOF}_{N^{-1}}(x, L_0) = \frac{\left(N^{-\frac{1}{2}} x \right)^\intercal U U^\intercal \left(N^{-\frac{1}{2}} x \right)}{\langle N^{-1} x, x \rangle}$. Notice that this expression is independent of $S$. In particular, we can change the diagonal values of $S$ without changing the goodness of fit.

Since $L_0$ is full rank, we have $s_1, s_2, s_3 > 0$. Let $0 < \epsilon < \min\{s_1^{-2}, s_2^{-2}, s_3^{-2}\}$. Since, by assumption, the statement $|v_1| = |v_2| = |v_3|$ is false, we have indices $i, j$ with $i \neq j$ and $|v_i| < |v_j|$. Denote by $k$ the remaining index so that $\{i, j, k\} = \{1, 2, 3\}$. Now define $\tilde{s}_i = \left(s_i^{-2} - \epsilon\right)^{-\frac{1}{2}}$, $\tilde{s}_j = \left(s_j^{-2} + \epsilon\right)^{-\frac{1}{2}}$, $\tilde{s}_k = s_k$. Then let $\widetilde{S} = \text{diag}(\tilde{s}_1, \tilde{s}_2, \tilde{s}_3)$. Then, we have $\Trace\left(S^{-2}\right) = \Trace\left( \widetilde{S}^{-2}\right)$, $\tilde{s}_i^{-2} = s_i^{-2} - \epsilon$, and $\tilde{s}_j^{-2} = s_j^{-2} + \epsilon$. Now notice that 
$U^\intercal N^{-\frac{1}{2}} x = \sqrt{\mathbb{E}\left[s^2\right]} S V^\intercal \eta_0 = \sqrt{\mathbb{E}\left[s^2\right]} \cdot (v_1, v_2, v_3)$. Using this, we get
\begin{align*}
\left(U^\intercal N^{-\frac{1}{2}} x \right)^\intercal S^{-2} \left(U^\intercal N^{-\frac{1}{2}} x \right)
= \mathbb{E}\left[s^2\right] \cdot \sum_{l = 1}^3 s_l^{-2} v_l^2
< \mathbb{E}\left[s^2\right] \cdot \sum_{l = 1}^3 \tilde{s}_l^{-2} v_l^2
= \left(U^\intercal N^{-\frac{1}{2}} x \right)^\intercal \widetilde{S}^{-2} \left(U^\intercal N^{-\frac{1}{2}} x \right).
\end{align*}
If we now define $\widetilde{L} = N^{\frac{1}{2}} U \widetilde{S} V^\intercal$, we get
\begin{multline}
\widetilde{\text{NAI}}(L_0) = 1 +  \frac{\mu_{x, N}}{\langle N^{-1} x, x\rangle} \cdot \frac{1}{1 - \mu_{x, N} \cdot  \text{GOF}_{N^{-1}}(x, L_0)} \cdot
\frac{\left(U^\intercal N^{-\frac{1}{2}} x \right)^\intercal S^{-2} \left(U^\intercal N^{-\frac{1}{2}} x \right)}{\Trace\left(S^{-2} \right)} \\
< 1 +  \frac{\mu_{x, N}}{\langle N^{-1} x, x\rangle} \cdot \frac{1}{1 - \mu_{x, N} \cdot  \text{GOF}_{N^{-1}}(x, \widetilde{L})} \cdot
\frac{\left(U^\intercal N^{-\frac{1}{2}} x \right)^\intercal \widetilde{S}^{-2} \left(U^\intercal N^{-\frac{1}{2}} x \right)}{\Trace\left(\widetilde{S}^{-2} \right)}
= \widetilde{\text{NAI}}(\widetilde{L}).
\end{multline}
Thus, we have found a leadfield matrix $\widetilde{L}$ such that $\widetilde{\text{NAI}}(L_0) < \widetilde{\text{NAI}}(\widetilde{L})$. In fact, since the map $L \mapsto \widetilde{\text{NAI}}(L)$ is continuous, we can find a small perturbation $L$ of this $\widetilde{L}$ such that still $\widetilde{\text{NAI}}(L_0) < \widetilde{\text{NAI}}(L)$ and, additionally, $\text{GOF}_{N^{-1}}(x, L) < \text{GOF}_{N^{-1}}(x, L_0)$. Hence, we can even find a leadfield matrix that simultaneously has a higher value for $\widetilde{\text{NAI}}$ and a worse fit to the data than $L_0$.
\end{proof}


\section*{Appendix}

Here, we want to show how to derive equation (\ref{derivative_noise_distortion}).

\begin{lemma}
\label{conjugation_derivative}
Let $h: U \rightarrow \mathbb{R}^{l \times l}$, where $U \subset \mathbb{R}^{k \times l}$ is open, be differentiable. Let
\[
f : U \rightarrow \mathbb{R}^{k \times k}; A \mapsto A h(A) A^\intercal. 
\]
Then $f$ is differentiable with
\[
Df(A)(B) = A h(A) B^\intercal + B h(A) A^\intercal + A Dh(A)(B) A^\intercal.
\]
\end{lemma}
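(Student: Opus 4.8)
The plan is to prove this directly from the definition of the Fréchet derivative by expanding $f(A + B)$ and isolating the part that is linear in $B$. First I would abbreviate $H := h(A)$ and invoke differentiability of $h$ to write $h(A + B) = H + Dh(A)(B) + \psi(B)$ with $\psi(B) \in o(B)$. Substituting this into
\[
f(A + B) = (A + B)\, h(A + B)\, (A + B)^\intercal
\]
and multiplying out gives twelve terms, each a product of a left factor ($A$ or $B$), a middle factor ($H$, $Dh(A)(B)$, or $\psi(B)$), and a right factor ($A^\intercal$ or $B^\intercal$).

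Next I would sort these twelve terms into three groups. The unique constant term is $A H A^\intercal = f(A)$. The terms that are linear in $B$ are precisely $A H B^\intercal$, $B H A^\intercal$, and $A\, Dh(A)(B)\, A^\intercal$ (the last is linear because $Dh(A)$ is itself a linear operator), and their sum is exactly the claimed formula for $Df(A)(B)$. All remaining terms — for instance $B H B^\intercal$, $A\, Dh(A)(B)\, B^\intercal$, $A\, \psi(B)\, A^\intercal$, and the other mixed products — form the candidate remainder $\varphi(B)$.

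The substance of the argument is then to verify that $B \mapsto A H B^\intercal + B H A^\intercal + A\, Dh(A)(B)\, A^\intercal$ is a bounded linear operator and that $\varphi(B) \in o(B)$; uniqueness of the Fréchet derivative then finishes the proof. Linearity is immediate, and since $U$ sits in a finite-dimensional space every linear map is automatically bounded, so that point is free. For the remainder I would use that matrix multiplication is bounded bilinear (again by finite-dimensionality and equivalence of norms): any product containing two explicit factors of $B$ is $O(\|B\|^2) = o(\|B\|)$, and any product containing a factor $\psi(B) \in o(\|B\|)$ alongside factors that are $O(1)$ or $O(\|B\|)$ stays in $o(\|B\|)$. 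Applying these estimates termwise shows $\|\varphi(B)\| / \|B\| \to 0$ as $B \to 0$.

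The only obstacle — insofar as there is one — is purely organizational: keeping the bookkeeping of the twelve terms straight and, in particular, noticing that the single mixed term $A\, Dh(A)(B)\, B^\intercal$ pairs the \emph{linear} map $Dh(A)$ with an extra explicit factor of $B$, so that it is genuinely quadratic in $B$ and hence negligible rather than contributing to the derivative. No analytic subtlety beyond the standard little-$o$ calculus in finite-dimensional matrix spaces enters.
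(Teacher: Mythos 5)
Your proof is correct and follows essentially the same route as the paper's: expand $(A+B)\,h(A+B)\,(A+B)^\intercal$ using $h(A+B) = h(A) + Dh(A)(B) + \psi(B)$, read off $f(A)$ and the three terms linear in $B$, and absorb everything else (including the genuinely quadratic term $A\,Dh(A)(B)\,B^\intercal$) into $o(B)$. The paper is terser about the final little-$o$ estimates, but the decomposition and bookkeeping are identical.
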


\begin{proof}
Let $B \in B_{\epsilon}(A)$. Then
\begin{multline*}
f(A + B) = (A + B) h(A + B) (A + B)^\intercal
= (A + B) (h(A) + Dh(A)(B) + \varphi_h(B)) (A^\intercal + B^\intercal) \\
= f(A) + A h(A) B^\intercal + B h(A) A^\intercal + A Dh(A)(B) A^\intercal \\
+ A Dh(A)(B) B^\intercal + A \varphi_h(B) A^\intercal + A \varphi_h(B) B^\intercal
+ B h(A) B^\intercal  + B Dh(A)(B) A^\intercal + B Dh(A)(B) B^\intercal + B \varphi_h(B) A^\intercal + B \varphi_h(B) B^\intercal \\
= f(A) + A h(A) B^\intercal + B h(A) A^\intercal + A Dh(A)(B) A^\intercal + o(B).
\end{multline*}
\end{proof}

\begin{lemma}
\label{matrix_inverse_derivative}
Let $U \subset \mathbb{R}^{l \times l}$ be the set of invertible matrices. Let $f : U \rightarrow U; A \mapsto A^{-1}$. Then for $A \in U$ we have
\[
Df(A)(B) = - A^{-1} B A^{-1}.
\]
\end{lemma}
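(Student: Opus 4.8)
The plan is to verify directly from the definition of the Fréchet derivative that the candidate map $T(B) = -A^{-1} B A^{-1}$ is the derivative of $f$ at $A$. First I would record that $T$ is manifestly linear in $B$ and bounded, since $\|A^{-1} B A^{-1}\| \le \|A^{-1}\|^2 \, \|B\|$, so it is a legitimate candidate for $Df(A)$. By the uniqueness of the Fréchet derivative noted in the differentiability definition, it then suffices to exhibit the expansion $f(A+B) = f(A) + T(B) + \varphi(B)$ with $\varphi(B) \in o(B)$.

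The key algebraic step is a resolvent-type identity. For $B$ small enough that $A + B$ remains invertible (possible because $U$ is open), I would write
\[
(A+B)^{-1} - A^{-1} = (A+B)^{-1}\bigl(A - (A+B)\bigr) A^{-1} = -(A+B)^{-1} B A^{-1}.
\]
This already displays the derivative up to replacing the awkward factor $(A+B)^{-1}$ by $A^{-1}$. To make that replacement explicit, I would substitute the rearranged identity $(A+B)^{-1} = A^{-1} - (A+B)^{-1} B A^{-1}$ back into its own right-hand side, obtaining
\[
(A+B)^{-1} - A^{-1} = -A^{-1} B A^{-1} + (A+B)^{-1} B A^{-1} B A^{-1}.
\]
Thus the remainder is $\varphi(B) = (A+B)^{-1} B A^{-1} B A^{-1}$, which is visibly quadratic in $B$.

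The only point requiring genuine care is confirming $\varphi(B) \in o(B)$. I would bound $\|\varphi(B)\| \le \|(A+B)^{-1}\| \, \|A^{-1}\|^2 \, \|B\|^2$ and then observe that, since matrix inversion is continuous on the open set of invertible matrices, $\|(A+B)^{-1}\|$ stays bounded (say by $2\|A^{-1}\|$) for all sufficiently small $B$. Hence $\|\varphi(B)\| / \|B\| \to 0$ as $B \to 0$, which is precisely the $o(B)$ condition, and this identifies $Df(A)(B) = -A^{-1} B A^{-1}$. I do not expect any serious obstacle here, the continuity bound being the only subtlety; alternatively, one could expand $(A+B)^{-1} = (\Id + A^{-1} B)^{-1} A^{-1}$ as a Neumann series for $\|A^{-1} B\| < 1$, read off the order-zero and order-one terms, and bound the remaining tail by a convergent geometric series to reach the same conclusion.
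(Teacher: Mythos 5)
Your proof is correct, and it takes a slightly different (and in some ways cleaner) route than the paper. The paper invokes the Woodbury identity to write $(A+B)^{-1} = A^{-1} - A^{-1}B(\Id + A^{-1}B)^{-1}A^{-1}$ and then splits off the remainder $A^{-1}B\left(\Id - (\Id + A^{-1}B)^{-1}\right)A^{-1}$, asserting without further comment that it is $o(B)$. You instead use the second resolvent identity $(A+B)^{-1} - A^{-1} = -(A+B)^{-1}BA^{-1}$ and iterate it once to obtain the exact remainder $(A+B)^{-1}BA^{-1}BA^{-1}$, which is manifestly quadratic in $B$. Both are exact algebraic expansions verified against the definition of the Fr\'echet derivative, so the overall strategy is the same; but your version avoids citing an external matrix identity, and, more importantly, you actually supply the estimate that the paper leaves implicit: bounding $\|(A+B)^{-1}\|$ near $B=0$ by continuity of inversion (or via a Neumann series) to conclude $\|\varphi(B)\|/\|B\| \to 0$. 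The paper's remainder requires essentially the same continuity argument to see that $\Id - (\Id + A^{-1}B)^{-1} = O(B)$, so your write-up is the more complete of the two. Your preliminary remark that $T(B) = -A^{-1}BA^{-1}$ is linear and bounded, together with the uniqueness of the derivative, is also a point the paper skips.
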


\begin{proof}
Let $A \in \mathbb{R}^{l \times l}$. By the Woodbury identity\footnote{\url{https://en.wikipedia.org/wiki/Woodbury_matrix_identity}} (with $U = B, C = \Id_l, V = \Id_l$), we have
\begin{multline*}
(A + B)^{-1} = A^{-1} - A^{-1} B (\Id + A^{-1} B)^{-1} A^{-1} 
= A^{-1} - A^{-1} B A^{-1} + A^{-1} B \left(\Id - \left(\Id + A^{-1} B \right)^{-1} \right) A^{-1} \\
= A^{-1} - A^{-1} B A^{-1} + o(B).
\end{multline*}
\end{proof}

Now let $f_1(A) = A^\intercal C A$, $f_2(A) = A^{-1}$, and $f_3(A) = A f_2(f_1(A)) A^\intercal$. Then
\[
f_1(A + B) = A^\intercal C A + A^\intercal C B + B^\intercal C A + B^\intercal C B
= f_1(A) + A^\intercal C B + B^\intercal C A + o(B),
\]
and thus
\[
Df_1(A)(B) = A^\intercal C B + B^\intercal C A.
\]
Then, by the chain rule and lemma \ref{matrix_inverse_derivative}, we have
\begin{multline}
D(f_2 \circ f_1)(A)(B) = Df_2(f_1(A)) \circ Df_1(A) (B) = Df_2(A^\intercal C A)(A^\intercal C B + B^\intercal C A) \\
= - (A^\intercal C A)^{-1} (A^\intercal C B + B^\intercal C A) (A^\intercal C A)^{-1}.
\end{multline}

Using lemma \ref{conjugation_derivative}, this now implies
\begin{multline*}
D(f_3)(A)(B) = A f_2 \circ f_1(A) B^\intercal + B f_2 \circ f_1(A) A^\intercal + A D(f_2 \circ f_1)(A)(B) A^\intercal \\
= A (A^\intercal C A)^{-1} B^\intercal + B (A^\intercal C A)^{-1} A^\intercal 
- A (A^\intercal C A)^{-1} (A^\intercal C B + B^\intercal C A) (A^\intercal C A)^{-1} A^\intercal.
\end{multline*}

Now, note that for a linear map $T$ we have $DT(A) = T$. Letting $g(A) = \Trace \left( \left(A^\intercal C A \right)^{-1} A^\intercal C N C A\right)$, the equation $\Trace(E \cdot F) = \Trace(F \cdot E)$ implies
\[
g(A) = \Trace \left( N^{\frac{1}{2}} C A \left( A^\intercal C A\right)^{-1} A^\intercal C N^{\frac{1}{2}}\right)
= \Trace\left(N^{\frac{1}{2}} C f_3(A)  C N^{\frac{1}{2}}\right),
\]
and since the map $D \mapsto \Trace\left(N^{\frac{1}{2}} C D C N^{\frac{1}{2}} \right)$ is linear, the chain rule implies
\begin{multline*}
Dg(A)(B) = \Trace\left(N^{\frac{1}{2}} C \left(D f_3(A)(B) \right)C N^{\frac{1}{2}} \right) \\
= \Trace\left(N^{\frac{1}{2}} C ( A (A^\intercal C A)^{-1} B^\intercal + B (A^\intercal C A)^{-1} A^\intercal 
- A (A^\intercal C A)^{-1} (A^\intercal C B + B^\intercal C A) (A^\intercal C A)^{-1} A^\intercal) C N^{\frac{1}{2}} \right) \\
=\Trace\left(N^{\frac{1}{2}} C \left(B \left(A^\intercal C A\right)^{-1} A^\intercal - A \left(A^\intercal C A\right)^{-1} A^\intercal C B\left(A^\intercal C A\right)^{-1} A^\intercal\right) C N^{\frac{1}{2}} \right) \\
+ \Trace\left( N^{\frac{1}{2}} C \left( A \left(A^\intercal C A\right)^{-1} B^\intercal - A \left(A^\intercal C A\right)^{-1} B^\intercal C A\left(A^\intercal C A\right)^{-1} A^\intercal\right) C N^{\frac{1}{2}}\right) \\
\stackrel{\text{$\Trace(F^\intercal) = \Trace(F)$}}{=}
2 \cdot \Trace\left(N^{\frac{1}{2}} C \left(B \left(A^\intercal C A\right)^{-1} A^\intercal - A \left(A^\intercal C A\right)^{-1} A^\intercal C B\left(A^\intercal C A\right)^{-1} A^\intercal\right) C N^{\frac{1}{2}} \right) \\
= 2 \cdot \Trace\left(B \left(A^\intercal C A\right)^{-1} A^\intercal  C N C\right)
- 2 \cdot \Trace\left(B  \left(A^\intercal C A\right)^{-1} A^\intercal C N C  A \left(A^\intercal C A\right)^{-1} A^\intercal C\right)\\
= 2 \cdot \Trace\left(B \cdot \left( \left(A^\intercal C A\right)^{-1} A^\intercal  C N C - \left(A^\intercal C A\right)^{-1} A^\intercal C N C  A \left(A^\intercal C A\right)^{-1} A^\intercal C\right) \right).
\end{multline*}

This is now exactly equation (\ref{derivative_noise_distortion}).


\bibliographystyle{ieeetr}
\bibliography{sources}

\end{document}